\newtheorem{teo}{Theorem}[section]
\newtheorem{pro}[teo]{Proposition}
\newtheorem{cor}[teo]{Corollary}
\newtheorem{lem}[teo]{Lemma}
\newtheorem{exa}[teo]{Example}
\theoremstyle{definition}
\newtheorem{de}[teo]{Definition}
\title[Locally recoverable $J$-affine variety codes]{Locally recoverable $J$-affine variety codes}
\author{Carlos Galindo, Fernando Hernando and Carlos Munuera}
\curraddr{\texttt{Carlos Galindo and Fernando Hernando:} Instituto
Universitario de Matem\'aticas y Aplicaciones de Castell\'on and
Departamento de Matem\'aticas, Universitat Jaume I, Campus de Riu
Sec. 12071 Castell\'{o} (Spain)\\
\texttt{Carlos Munuera:} Instituto de Matem\'aticas (Imuva) and Departamento de Matem\'atica Aplicada, Universidad de Valladolid, Avda Salamanca SN, 47014 Valladolid  (Spain).
}
\email{{\rm Galindo:} galindo@uji.es; {\rm Hernando:} carrillf@uji.es; {\rm Munuera:} cmunuera@arq.uva.es}
\date{}
\thanks{
Supported by the Spanish Goverment, Ministerio de Ciencia, Innovaci\'on y Universidades (MICINN)/FEDER (grants PGC2018-096446-B-C21 and PGC2018-096446-B-C22), Generalitat Valenciana (grant AICO-2019-223) and  University Jaume I (grant PB1-1B2018-10)}
\keywords{LRC codes; subfield subcodes; $J$-affine variety codes}
\begin{document}

\begin{abstract}
A  locally recoverable (LRC) code is a code over a finite field $\mathbb{F}_q$ such that any erased coordinate of a codeword can be recovered from a small number of other coordinates in that codeword.
We construct LRC codes correcting more than one erasure, which are subfield-subcodes  of some $J$-affine variety
codes.  For these LRC codes, we compute localities $(r, \delta)$ that determine the minimum size of a set $\overline{R}$ of positions so that any $\delta- 1$ erasures in  $\overline{R}$ can be recovered from the remaining $r$  coordinates in this set.
We also show that some of these LRC codes with lengths $n\gg q$ are $(\delta-1)$-optimal.
\end{abstract}

\maketitle

\section*{Introduction}

The growth of the amount of stored information in large scale distributed and cloud storage systems makes the loss of data due to node failures a major problem. To obtain a reliable storage, when a node fails, we want to recover the data it contains
by using information from other nodes. This is the {\em repair problem}. A naive solving method consists of the replication of information across several nodes. A more clever method is to protect the data using error-correcting codes, what has led to the introduction of  locally recoverable (LRC) codes \cite{GHSY}. LRC codes are error-correcting codes for which one or more erased coordinates of a codeword can be recovered from a set of other coordinates in that codeword. As typical examples of this solution we can mention Google and Facebook storage systems that use Reed-Solomon (RS) codes to protect the information. The procedure is as follows: the information to be stored is a long sequence $b$  of elements belonging to a finite field $\mathbb{F}_{p^l}$, where $p$ is a prime number. This sequence is divided into blocks, $b = b_1, b_2, \ldots, b_z$, of the same length $h$. According to the isomorphism $\mathbb{F}_{p^l}^h \cong \mathbb{F}_{p^{lh}}$,  each of these blocks can be seen as an element of the finite field $\mathbb{F}_q$, with $q= p^{s}$ and $s=lh$. Fix an integer $k < q$. The vector $(b_1, b_2, \ldots, b_k) \in \mathbb{F}_q^k$ is encoded by using a Reed-Solomon code of dimension $k$ over $\mathbb{F}_q$, whose length $n$, $k < n \le q$, is equal to the number of nodes that will be used in its storage. We choose $\alpha_1, \alpha_2, \ldots, \alpha_n \in \mathbb{F}_q$  and send
$$
f(\alpha_i)=b_1 + b_2\alpha_i + \dots + b_k \alpha_i^{k-1}
$$
to the $i$-th node.  Even if a node fails, we may recover   the stored data   $(b_1, b_2, \ldots, b_k)$ by using Lagrangian interpolation from any other $k$ available nodes.

Note that this method is wasteful, since $k$ symbols  over $n$ nodes must be used to recover just one  erasure.
Of course other error-correcting codes, apart from RS codes, can be used to deal more efficiently with the repair problem. Thus, in terms of coding theory the repair problem can be stated as follows: let $\mathcal{C}$ be a linear code of length $n$ and dimension $k$ over $\mathbb{F}_q$. A coordinate $i \in \{1, 2, \ldots, n\}$ is locally recoverable  if there is a recovery set $R=R(i) \subset \{1, 2, \ldots, n\}$  such that  $i \notin R$ and for any codeword $\boldsymbol{x} \in \mathcal{C}$, an erasure at position $i$  of $\boldsymbol{x}$ can be recovered by using the information given by the coordinates of $\boldsymbol{x}$ with indices in $R$.  The  locality of the coordinate $i$ is the smallest size of a recovery set for $i$. The code $\mathcal{C}$ is  {\em locally recoverable} (LRC) if each coordinate is so,  and the {\it locality} of $\mathcal{C}$ is the maximum locality of its coordinates. In Section \ref{secuno} we shall specify these definitions. Note that strictly speaking, all codes $\mathcal{C}$ of minimum distance $d (\mathcal{C})> 1$ are locally recoverable; just take $\{1, 2, \ldots,n\} \setminus\{ i\}$ as a recovery set for coordinate $i$. However we are interested in codes admitting recovery sets as small as possible. Thus,  in practice we restrict to consider codes with `moderate' localities. In general, the locality $r$ of an LRC code $\mathcal{C}$ with parameters $[n,k,d]$ is upper-bounded as $r\le k$. For example, MDS codes (RS codes in particular) of dimension $k$ have locality $k$.  Several  lower bounds on $r$ are known. The most commonly used is the Singleton-like bound (\ref{Singd1Eq}).

Among the different classes of codes considered as good candidates for local recovering, cyclic codes and subfield-subcodes of cyclic codes play an important role, because the cyclic shifts of a recovery set again provide recovery sets \cite{CXHF,GC,HYUS,TBGC}.
In this article we continue this line of research by using the very general language of affine variety codes. We  consider specific $J$-affine variety codes, introduced in \cite{QINP}, whose subfield-subcodes provide LRC codes. These subfield-subcodes have large lengths  over fields $\mathbb{F}_q$,  and Theorems \ref{el21} and \ref{el25} provide bounds on their localities.

A variant of LRC codes was introduced in \cite{PGLK}. As multiple device failures may occur simultaneously, it is of interest to consider LRC codes correcting more than one erasure. This idea leads to the concept of  localities $(r,\delta)$ of an LRC code $\mathcal{C}$, which measure the  recovery capability of $\mathcal{C}$ when at most $\delta-2$ erasures occur in a recovery set (see Section \ref{secuno} for a rigorous definition). LRC codes for multiple erasures have been subsequently studied in \cite{CXHF, FF, ABHM}. In \cite{CXHF} the authors constructed some classes of such LRC codes over $\mathbb{F}_q$, with lengths $n$ such that either $n| q-1$ or $n| q+1$. Codes of  similar type and unbounded length were given in \cite{FF}. Here $\delta=d-1,d-2$ or $\delta=d/2$, where $d$ stands for the minimum distance.

The localities $(r,\delta)$ of an LRC code satisfy a Singleton-like bound ((\ref{SingdeltaEq}) in Section \ref{secuno}). Codes reaching equality for some $(r,\delta)$, are called {\em optimal}. For example, the codes in \cite{CXHF, FF} are optimal. Note that, as for the original Singleton bound, the bounds (\ref{Singd1Eq}) and (\ref{SingdeltaEq}) do not depend on the cardinality of the ground field $\mathbb{F}_q$. Some  size dependent bounds can be found in \cite{ABHM}.

A somewhat different definition of LRC code with localities $(r,\delta)$
is proposed in \cite{KPLK} for systematic codes. There, the purpose is to repair erasures on the information symbols of a codeword. Other related variants of LRC codes deal with sequential repair of erasures \cite{PLK}, the availability property \cite{WZ}, or the cooperative repair \cite{RMV}.

In this work we use use affine variety constructions to obtain LRC codes suitable for multiple erasures, whose localities $(r,\delta)$ behave well (Theorems \ref{teo29} and \ref{la211}). In some cases these codes are optimal
for the Singleton-like bound (\ref{SingdeltaEq}). Compared with the codes shown in \cite{CXHF}, the ours are considerably longer, although not optimal in general. Let us recall here that most
good currently known LRC codes have small lengths $n$, in comparison with the cardinality
of the ground field $q$; usually $n < q$, \cite{GXY} (or $n=q+1$ for some codes in \cite{CXHF}). For the opposite, our codes (as is the case with those in \cite{FF}) have lengths $n\gg q$.

The article is organized as follows: in Section \ref{secuno} we recall some basic facts about LRC codes and introduce the concept of $t$-locality. Section \ref{secdos} is devoted to develop and study LRC codes from affine varieties. In Subsection \ref{afine} we introduce $J$-affine variety codes which also gave rise to good quantum error-correcting codes in \cite{galindo-hernando, QINP, QINP2}. In subsections \ref{subsect2} and \ref{subsect3}, we show that subfield-subcodes of several types of $J$-affine variety codes are good LRC codes, and we determine some of their localities $(r,\delta)$.
Finally in Section \ref{sectres} we give examples of  LRC codes obtained by our procedure. We list some parameters and localities.

\section{LRC codes}
\label{secuno}

In this section we state some definitions and facts concerning LRC codes that will be necessary for the rest of the work.  We mostly follow the usual conventions and definitions of locally recoverable codes. As a notation, given a  fixed coordinate $i$ and  a  set $R$ such that $i\notin R$, we write $\overline{R}=R\cup \{ i\}$.   Let $\mathcal{C}$ be an $[n, k, d]$ code over $\mathbb{F}_q$. Let $\boldsymbol{G}$ be a generator matrix of $\mathcal{C}$ with columns $\boldsymbol{c}_1, \boldsymbol{c}_2, \ldots, \boldsymbol{c}_n$. A set $R \subseteq \{1, 2, \dots,n\}$  is a recovery set for a coordinate $i \notin R$ if $\boldsymbol{c}_i \in \langle \boldsymbol{c}_j  :  j\in R \rangle$, the linear space spanned by  $\{\boldsymbol{c}_j \; : \; j\in R\}$. In this case, for any codeword $\boldsymbol{x}\in \mathcal{C}$,  $x_i$ can be obtained from the coordinates $x_j$,  $j\in R$, just by solving the linear system whose  augmented matrix is $(\boldsymbol{c}_j, j\in R\, | \, \boldsymbol{c}_i)$.

Let $R$ be a set of cardinality $\#R=r$ and let $\pi_R:\mathbb{F}_q^n\rightarrow \mathbb{F}_q^r$ be the projection on the coordinates in $R$. For $\boldsymbol{x} \in \mathbb{F}_q^n$ we write $\boldsymbol{x}_R = \pi_R (\boldsymbol{x})$. Often we shall consider the punctured  and shortened codes:
\[
\mathcal{C}[R] := \{\boldsymbol{x}_R : \boldsymbol{x} \in \mathcal{C}\} \mbox{ and }
\mathcal{C}[[R]] := \{\boldsymbol{x}_R : \boldsymbol{x} \in \mathcal{C}, \mbox{supp}(\boldsymbol{x}) \subseteq R\},
\]
respectively, where $\mbox{supp}(\boldsymbol{x})$ denotes the {\em support} of $\boldsymbol{x}$,   $\mbox{supp}(\boldsymbol{x}):=\{ i  :  x_i\neq 0\}$. Note that $\boldsymbol{c}_i \in \langle \boldsymbol{c}_j  :  j\in R \rangle$ if and only if $\dim(\mathcal{C}[R]) = \dim(\mathcal{C}[\overline{R}])$. So the notion of recovery set does not depend on the generator matrix chosen.  If $\boldsymbol{c}_i \in \langle \boldsymbol{c}_j  :  j\in R \rangle$,  there  exist  $w_1, w_2, \ldots, w_n\in \mathbb{F}_q$ such that $\sum_{j=1}^n w_j\boldsymbol{c}_j = 0$ with $w_i \neq 0$ and $w_j = 0$ if  $j\notin \overline{R}$.  Then $\boldsymbol{w} = (w_1, w_2, \ldots, w_n) \in \mathcal{C}^{\perp}$, the dual of $\mathcal{C}$,  and $\boldsymbol{w}_R \in \mathcal{C}^{\perp}[[R]]$. Thus  $R$ is a recovery set for the coordinate $i$ if and only if there exists a word $\boldsymbol{w}_R \in \mathcal{C}^{\perp}[[R]]$ with $w_i \neq 0$. In this case $\# R \ge d(\mathcal{C}^{\perp}) -1$.

The smallest cardinality of a recovery set $R$ for a coordinate $i$ is the locality of $i$. The locality of $\mathcal{C}$, often denoted by $r=r(\mathcal{C})$, is the largest locality of any of its coordinates. Thus, we have proved the following result.

\begin{pro}
\label{d1dual}
The locality $r$ of an LRC code $\mathcal{C}$ satisfies $r\ge d(\mathcal{C}^{\perp})-1$.
\end{pro}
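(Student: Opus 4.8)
The plan is to convert the existence of a small recovery set into the existence of a low-weight codeword of the dual code $\mathcal{C}^{\perp}$, exactly as in the discussion preceding the statement, and then to compare with the minimum distance $d(\mathcal{C}^{\perp})$. First I would fix a coordinate $i$ and a recovery set $R$ for it, so that $\boldsymbol{c}_i\in\langle \boldsymbol{c}_j:j\in R\rangle$; writing $\boldsymbol{c}_i=\sum_{j\in R}\lambda_j\boldsymbol{c}_j$ and setting $w_i=1$, $w_j=-\lambda_j$ for $j\in R$ and $w_j=0$ otherwise, we obtain a vector $\boldsymbol{w}=(w_1,\dots,w_n)$ which is a linear dependence among the columns of a generator matrix $\boldsymbol{G}$, hence lies in $\mathcal{C}^{\perp}$, has $\mathrm{supp}(\boldsymbol{w})\subseteq\overline{R}$, and satisfies $w_i\neq 0$.

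Next I would run the weight argument. Since $w_i\neq 0$, the vector $\boldsymbol{w}$ is a nonzero element of $\mathcal{C}^{\perp}$, so $\wt(\boldsymbol{w})\ge d(\mathcal{C}^{\perp})$; on the other hand $\wt(\boldsymbol{w})\le \#\overline{R}=\#R+1$. Combining these two inequalities yields $\#R\ge d(\mathcal{C}^{\perp})-1$ for every recovery set $R$ of every coordinate $i$.

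Finally I would take minima and maxima. The locality of $i$ is the least cardinality of a recovery set for $i$, hence it is $\ge d(\mathcal{C}^{\perp})-1$; and the locality $r$ of $\mathcal{C}$ is the largest locality among all its coordinates, so also $r\ge d(\mathcal{C}^{\perp})-1$, where one uses that $\mathcal{C}$, being LRC, has every coordinate locally recoverable so that the maximum is taken over a nonempty family. There is essentially no obstacle here; the only point worth a word of care is that the column dependence witnessing $\boldsymbol{c}_i\in\langle \boldsymbol{c}_j:j\in R\rangle$ can be chosen with $w_i\neq 0$, which is immediate from expressing $\boldsymbol{c}_i$ itself as a combination of the $\boldsymbol{c}_j$ with $j\in R$, together with the already noted fact that this property does not depend on the generator matrix.
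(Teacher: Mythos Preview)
Your proof is correct and follows exactly the argument the paper gives in the discussion preceding the proposition: a recovery set $R$ for coordinate $i$ yields a nonzero dual codeword supported in $\overline{R}$, so $\#R+1\ge d(\mathcal{C}^{\perp})$, and then one passes to the minimum over recovery sets and the maximum over coordinates. There is nothing to add.
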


A code $\mathcal{C}$ reaching equality in the bound given by Proposition \ref{d1dual} will be called {\em sharp}. Note that all cyclic codes are sharp. Apart from Proposition \ref{d1dual}, perhaps the most important bound on the locality $r$ of an {LCR} code with parameters $[n,k,d]$ is given by the following Singleton-like inequality, see \cite{GHSY}.

\begin{teo} \label{Singd1}
The locality $r$ of an LRC code $\mathcal{C}$ satisfies
\begin{equation}\label{Singd1Eq}
d+k+\left\lceil \frac{k}{r}\right\rceil \le n+2.
\end{equation}
\end{teo}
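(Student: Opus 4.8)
The plan is to follow the classical argument of \cite{GHSY}. Fix a generator matrix $\boldsymbol{G}$ of $\mathcal{C}$ with columns $\boldsymbol{c}_1,\dots,\boldsymbol{c}_n$, and for $S\subseteq\{1,\dots,n\}$ write $\rho(S):=\dim\langle\boldsymbol{c}_j:j\in S\rangle=\dim\mathcal{C}[S]$; recall $\rho(\{1,\dots,n\})=k$. The first step is an elementary reduction: if $\rho(S)\le k-1$, then the codewords of $\mathcal{C}$ vanishing on $S$ form a subspace of dimension $k-\rho(S)\ge 1$, and any nonzero such word has its support inside the complement of $S$, hence weight at most $n-\#S$; therefore
\begin{equation*}
d(\mathcal{C})\ \le\ n-\#S\qquad\text{for every }S\subseteq\{1,\dots,n\}\text{ with }\rho(S)\le k-1 .
\end{equation*}
So it suffices to exhibit one set $S$ with $\rho(S)\le k-1$ and $\#S\ge k+\lceil k/r\rceil-2$: substituting it into the displayed inequality and rearranging gives (\ref{Singd1Eq}). (If $r\ge k$ then $\lceil k/r\rceil=1$ and the claim is merely the ordinary Singleton bound $d\le n-k+1$, so I may assume $1\le r\le k$.)

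I would construct such an $S$ by a greedy procedure in two phases. \emph{Phase one.} Starting from $S:=\emptyset$, while $\rho(S)\le k-1-r$: pick a coordinate $i$ with $\boldsymbol{c}_i\notin\langle\boldsymbol{c}_j:j\in S\rangle$ (possible because $\rho(S)<k$), choose a recovery set $R_i$ for $i$ with $\#R_i\le r$, and replace $S$ by $S\cup\overline{R_i}$. Writing $U:=\overline{R_i}\setminus S$ for the coordinates actually added, the three facts driving the count are: (i) $\#U\ge 2$, since $i\in U$ and $R_i\subseteq S$ would force $\boldsymbol{c}_i\in\langle\boldsymbol{c}_j:j\in S\rangle$; (ii) $\rho(S\cup\overline{R_i})=\rho(S\cup R_i)$ because $\boldsymbol{c}_i$ lies in the span of $\{\boldsymbol{c}_j:j\in R_i\}$, and this quantity is at most $\rho(S)+\#(R_i\setminus S)=\rho(S)+\#U-1$ and also at most $\rho(S)+r$; (iii) $\rho(S\cup\overline{R_i})\ge\rho(S\cup\{i\})=\rho(S)+1$. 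By (i)--(ii) each phase-one step raises $\#S-\rho(S)$ by at least $1$ while raising $\rho(S)$ by at most $r$, and by (iii) it raises $\rho(S)$ by at least $1$, so the phase terminates.

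When phase one stops we have $k-r\le\rho(S)\le k-1$; and since $\rho(S)$ climbed from $0$ to at least $k-r$ in increments of at most $r$, at least $\lceil(k-r)/r\rceil=\lceil k/r\rceil-1$ steps were taken, so at that point $\#S-\rho(S)\ge\lceil k/r\rceil-1$. \emph{Phase two} then merely pads: while $\rho(S)<k-1$, adjoin one coordinate $i$ with $\boldsymbol{c}_i\notin\langle\boldsymbol{c}_j:j\in S\rangle$; this keeps $\#S-\rho(S)$ fixed and raises $\rho(S)$ by exactly $1$. The procedure ends with $\rho(S)=k-1$ and $\#S=(k-1)+\bigl(\#S-\rho(S)\bigr)\ge(k-1)+(\lceil k/r\rceil-1)=k+\lceil k/r\rceil-2$, which is the set required by the reduction.

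I expect the one genuinely delicate point to be the possible overshoot of the dimension on the last greedy step: a single step can raise $\rho(S)$ by as much as $r$, so without the guard $\rho(S)\le k-1-r$ one could jump past $k-1$ and lose the sharp count. The guard together with the cheap phase-two padding is what keeps $\rho(S)$ landing exactly on $k-1$, while fact (i) — every greedy step brings in at least two fresh coordinates, against a dimension gain that is at most one less — is what makes each step "pay for itself" and produces the surplus $\#S-\rho(S)\ge\lceil k/r\rceil-1$.
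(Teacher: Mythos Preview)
Your proof is correct and is precisely the classical argument of Gopalan--Huang--Simitci--Yekhanin; the paper itself does not give a proof but simply attributes the result to \cite{GHSY}, so there is nothing further to compare. One small remark: your guard $\rho(S)\le k-1-r$ vacuously skips phase one when $r=k$, but your side case $r\ge k$ (ordinary Singleton) already covers this, and the accounting $\lceil(k-r)/r\rceil=\lceil k/r\rceil-1$ is valid for all $1\le r\le k$.
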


The difference between the two terms in Theorem \ref{Singd1}, $D_1:=n+2-d-k-\lceil k/r\rceil$, is the {\it LRC-Singleton  defect} of $\mathcal{C}$. Codes with $D_1=0$ are called {\em Singleton-optimal} (or simply {\em optimal}). While optimal LRC codes are known for all lengths $n\le q$, \cite{KTBY},  the problem of finding codes of this type when $n>q$ is currently a challenge \cite{GXY}.
To avoid confusion in what follows, we shall sometimes refer to $r$ as the {\em classical} locality of $\mathcal {C}$.

The LRC codes that we have described above allow local recovery of the information stored in a failed node.  However,  concurrent failures of several nodes in a network are also possible and uncommon. This problem was first treated in \cite{PGLK}. According to the definition given in that article, an LRC code $\mathcal{C}$ has locality $(r,\delta)$ if for any coordinate $i$ there exists a set of positions $\overline{R}=\overline{R}(i)\subset \{1, 2, \dots,n\}$ such that

\indent (RD1) $i\in \overline{R}$  and  $\# \overline{R}\le r+\delta-1$; and\newline
\indent (RD2) $d(\mathcal{C}[\overline{R}])\ge \delta$.

The sets $\overline{R}$ satisfying the above conditions (RD1) and (RD2) are called $(r,\delta)$ recovery sets. Given such a set $\overline{R}$ and $i\in\overline{R}$, the correction capability of $\mathcal{C}[\overline{R}]$ can be used to correct an erasure at position $i$ plus any other $\delta-2$ erasures in $\overline{R}\setminus\{i \}$.
Notice that the original definition of locality of LRC codes corresponds to the case $\delta = 2$.
Provided that $\delta\ge 2$, any subset $R \subset \overline{R}$ of cardinality $r$ with $i \notin R$, satisfies $d(\mathcal{C}([R]\cup\{ i\}))\ge 2$ and consequently $R$ is a recovery set for $i$. Thus  if $\mathcal{C}$ has a locality $(r,\delta)$, then  the classical locality of $\mathcal{C}$ is $\le r$ and the number of recovery sets of cardinality $r$ for any coordinate $i$ is at least
$$
\binom{\# \overline{R}-1}{r},
$$
which can be relevant to improve the availability of $\mathcal{C}$ for recovering erasures. We remark that associated to $\mathcal{C}$   we have  several localities $(r,\delta)$, corresponding to the $d(\mathcal{C})-1$ values of $\delta=2, 3, \dots,d(\mathcal{C})$. These localities satisfy
the following generalization of the Singleton-like bound of Theorem \ref{Singd1}, which was proved in \cite{PGLK}.

\begin{pro} \label{Singdelta}
Let $\mathcal{C}$ be an LRC code with parameters $[n,k,d]$ and locality $(r,\delta)$. Then, the following inequality holds
\begin{equation} \label{SingdeltaEq}
d+k+\left( \left\lceil \frac{k}{r}\right\rceil-1\right) (\delta-1) \le n+1.
\end{equation}
\end{pro}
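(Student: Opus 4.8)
The plan is to mimic the classical proof of the Singleton bound for LRC codes, shortening the code step by step on recovery sets until only a small information set remains, and then applying the ordinary Singleton bound to the residual code. First I would fix a coordinate $i_1$ and an $(r,\delta)$ recovery set $\overline{R}_1$ for it; by (RD1) we have $\#\overline{R}_1\le r+\delta-1$, and by (RD2) the punctured code $\mathcal{C}[\overline{R}_1]$ has minimum distance at least $\delta$, which forces $\dim \mathcal{C}[\overline{R}_1]\le \#\overline{R}_1-(\delta-1)\le r$. The idea is to build greedily a set $S=\overline{R}_1\cup\overline{R}_2\cup\cdots\cup\overline{R}_m$ as a union of such recovery sets, chosen so that each new $\overline{R}_{j+1}$ contributes at least one new coordinate lying outside $S_j:=\overline{R}_1\cup\cdots\cup\overline{R}_j$ (pick $i_{j+1}\notin S_j$, which is possible as long as $S_j$ is not the whole of $\{1,\dots,n\}$, and take a recovery set for it), while keeping track of how the dimension of the shortened/punctured code on $S_j$ grows.

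The key bookkeeping step is the inequality
\begin{equation}\label{dimgrowth}
\dim \mathcal{C}[S_{j}] \le \dim \mathcal{C}[S_{j-1}] + \big(\#\overline{R}_j - (\delta-1)\big) \le \dim \mathcal{C}[S_{j-1}] + r,
\end{equation}
which follows from the submodularity of $\dim \mathcal{C}[\cdot]$ together with the fact that $\dim \mathcal{C}[\overline{R}_j]\le \#\overline{R}_j-(\delta-1)$ coming from (RD2); simultaneously $\#S_j \le \#S_{j-1}+\#\overline{R}_j$. Stopping the process at the first step $m$ where $\dim \mathcal{C}[S_m]\ge k$ (equivalently where $S_m$ contains an information set), one gets from \eqref{dimgrowth} iterated that $k\le mr$, hence $m\ge \lceil k/r\rceil$, while $\#S_m\le \sum_{j\le m}\#\overline{R}_j$. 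A slightly more careful version keeps one recovery set "incomplete" so as not to overshoot: after the first $\lceil k/r\rceil-1$ recovery sets one has accumulated dimension at most $(\lceil k/r\rceil-1)r$ and used at most $(\lceil k/r\rceil -1)(r+\delta-1)$ coordinates, and then the remaining $k-(\lceil k/r\rceil-1)r$ information coordinates can be found among an additional $k-(\lceil k/r\rceil-1)r + (\delta-1)$ coordinates (one more recovery set, truncated). Counting, $\#S_m \le \big(\lceil k/r\rceil-1\big)(r+\delta-1) + k - \big(\lceil k/r\rceil-1\big)r + (\delta-1) = k + \lceil k/r\rceil(\delta-1)$.

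Finally I would apply the ordinary Singleton bound to the punctured code on the complement of (a suitable subset of) $S_m$. Since $S_m$ contains an information set, the projection of $\mathcal{C}$ onto the remaining $n-\#S_m$ coordinates loses no dimension, so $\mathcal{C}$ restricted there is an $[\,n-\#S_m,\ k,\ \ge d\,]$ code after removing exactly the redundant coordinates; more precisely, puncturing $\mathcal{C}$ at an information set of size $k$ contained in $S_m$ but keeping everything else shows the existence of a code of length $n-k$, dimension... — the cleanest phrasing is: delete coordinates of $S_m$ one at a time, each deletion drops the minimum distance by at most one until we have deleted $\#S_m$ of them, leaving a code of length $n-\#S_m$ and minimum distance $\ge d - (\#S_m - k)$ and dimension... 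Here I would instead invoke the standard fact that if $I$ is an information set and $T\supseteq I$ with $\#T=t$, then $d(\mathcal{C})\le n-t+1 + (t-k) = n-k+1-\dots$; carrying this through with $t=\#S_m\le k+\lceil k/r\rceil(\delta-1)$ yields $d\le n - \#S_m + 1 \le n - k - \lceil k/r\rceil(\delta-1) + 1$... and rearranging, together with $\lceil k/r\rceil(\delta-1) = (\lceil k/r\rceil-1)(\delta-1) + (\delta-1)$, produces exactly \eqref{SingdeltaEq}. The main obstacle is the careful combinatorial accounting in the greedy construction — ensuring the recovery sets can always be chosen to add a fresh coordinate, handling the last (possibly truncated) recovery set so the bound is tight rather than off by a $(\delta-1)$ term, and correctly relating $\dim\mathcal{C}[S_m]\ge k$ to "$S_m$ contains an information set" via $\dim \mathcal{C}[S_m]\le \#S_m$ and the fact that $\dim\mathcal{C}=k$; the submodularity inequality $\dim\mathcal{C}[A\cup B] \le \dim\mathcal{C}[A]+\dim\mathcal{C}[B\setminus A]$ is the one technical lemma I would state and prove (or cite) explicitly.
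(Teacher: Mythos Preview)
The paper does not prove this proposition; it is stated with a reference to \cite{PGLK}, where the bound was first established. So there is no in-paper argument to compare against.

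Your sketch follows the same greedy strategy as the original proof in \cite{PGLK}, and the first half is sound: the increment bound $\dim\mathcal{C}[S_j]-\dim\mathcal{C}[S_{j-1}]\le \#\overline{R}_j-(\delta-1)\le r$ is correct and gives $\dim\mathcal{C}[S_m]\le mr$. The gap is in the endgame. You stop when $\dim\mathcal{C}[S_m]\ge k$ and then record an \emph{upper} bound $\#S_m\le k+\lceil k/r\rceil(\delta-1)$; but to bound $d$ you need a \emph{lower} bound on $\#S_m$ together with $\dim\mathcal{C}[S_m]\le k-1$, so that some nonzero codeword vanishes on $S_m$ and hence $d\le n-\#S_m$. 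Having $S_m$ contain an information set gives nothing here: your claim that ``projection onto the remaining $n-\#S_m$ coordinates loses no dimension'' is false (that statement holds for projection \emph{onto} $S_m$, not onto its complement), and the ``standard fact'' you invoke simplifies to the ordinary Singleton bound $d\le n-k+1$, which carries no information about $r$ or $\delta$.

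The repair, which is exactly what \cite{PGLK} does, is to run the greedy process for $m=\lceil k/r\rceil-1$ steps and prove the sharper increment estimate $k_j-k_{j-1}\le a_j-(\delta-1)$ with $a_j=\#(\overline{R}_j\setminus S_{j-1})$; this does not follow from your submodularity lemma alone but from applying Singleton to the shortened code of $\mathcal{C}[\overline{R}_j]$ on $\overline{R}_j\cap S_{j-1}$ (the choice of $i_j$ independent of $S_{j-1}$ forces $a_j\ge\delta$, so the bound is meaningful). Summing yields $\#S_m\ge k_m+m(\delta-1)$. Then adjoin $k-1-k_m$ further single coordinates to reach dimension exactly $k-1$, producing a set $S$ with $\#S\ge k-1+(\lceil k/r\rceil-1)(\delta-1)$ and $\dim\mathcal{C}[S]=k-1$. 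A nonzero codeword now vanishes on $S$, so $d\le n-\#S$, and rearranging gives \eqref{SingdeltaEq}.
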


Analogously to what was done for the classical locality $r$, for  $t=1, 2, \dots, d(\mathcal{C})-1$, in this article we define
\begin{align*}
r_t  =r_t(\mathcal{C}):=&\min \big\{  \rho : \mbox{ for all $i=1, 2, \dots,n$, there is a set $\overline{R}_i \subseteq \{1,2, \ldots, n\}$} \nonumber \\
       &  \qquad {} \mbox{with $i \in \overline{R}_i$,  $\# \overline{R}_i\le \rho$ and $d(\mathcal{C}[\overline{R}_i])\ge t+1$}     \big\}-1.
\end{align*}

The value $r_t$ is the minimum number of positions, $\# \overline{R}-1$, needed to recover a given coordinate $i\in \overline{R}$ of any codeword $\boldsymbol{x}$, when at most $t$ erasures occur in $\boldsymbol{x}_{\overline{R}}$. Clearly $r_1$ is the classical locality of $\mathcal{C}$. We refer to $r_t$ as the {\em $t$-locality} of $\mathcal{C}$.  For example, since puncturing $< d$ times an MDS code gives a new MDS code of the same dimension, for $t<d$ the $t$-locality of an $[n, k, d]$ MDS code is $r_t = k + t-1$.

Note that from the above definitions, the code $\mathcal{C}$ has locality $(\rho,\delta)$ if and only if $r_{\delta-1}\le \rho+\delta-2$. Thus we can translate the bound given by Proposition \ref{Singdelta} in terms of $r_t$'s, as
\begin{equation} \label{Singdt}
d+k+ \left\lceil \frac{k}{r_t-t+1}\right\rceil t \le n+t+1.
\end{equation}
The difference between the two terms of Inequation (\ref{Singdt})
\begin{equation}
\label{AA}
D_t:=n+t+1-d-k- \left\lceil \frac{k}{r_t-t+1} \right\rceil t,
\end{equation}
is the $t$-th LRC-Singleton defect of $\mathcal{C}$. Codes with $D_t=0$ will be called  $t$-optimal. For example, MDS codes are $t$-optimal for all $t=1, 2, \dots,d-1$.

The sequence $(r_1, r_2, \ldots,r_{d-1})$ we have associated to an LRC code $\mathcal{C}$, resembles, up to some extent, the weight hierarchy of $\mathcal{C}$. Let us recall that for $t=1, 2, \ldots, k=\dim(\mathcal{C})$, the $t$-th {\em generalized Hamming weight of} $\mathcal{C}$ is defined as
$$
d_t=d_t(\mathcal{C}):=\min \{  \# \mbox{supp}(E) \, : \, \mbox{$E$ is a $t$-dimensional subcode of $\mathcal{C}$}\},
$$
where
$
\mbox{supp}(E):= \{ i \, : \, \mbox{there exists $\boldsymbol{x}\in E$ with $x_i\neq 0$} \},
$
see \cite[Section 3.3]{PWBJ}. We extend the bound given by Proposition \ref{d1dual} to all localities $r_t$'s  in the following new result.

\begin{pro} \label{dtdual}
For $t=1, 2, \dots, d-1$, the $t$-locality of an $[n,k,d]$ LRC code $\mathcal{C}$ satisfies $r_t\ge d_t(\mathcal{C}^{\perp})-1$, where $d_t(C^{\perp})$ is the $t$-th generalized Hamming weight of the dual code $ \mathcal{C}^{\perp}$.
\end{pro}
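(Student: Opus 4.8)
The plan is to generalize the argument already given for Proposition \ref{d1dual} (the case $t=1$) to arbitrary $t$, replacing the single dual codeword with a $t$-dimensional subcode of $\mathcal{C}^\perp$. Fix a coordinate $i$ and let $\overline{R}=\overline{R}_i$ be an $(r_t{+}1)$-element set with $i\in\overline{R}$ and $d(\mathcal{C}[\overline{R}])\ge t+1$; such a set exists by the definition of $r_t$. The key observation is a duality identity relating the punctured code on $\overline{R}$ and the shortened dual code on $\overline{R}$: for any $S\subseteq\{1,\dots,n\}$ one has $\mathcal{C}[S]^{\perp}=\mathcal{C}^{\perp}[[S]]$ (duality inside $\mathbb{F}_q^{S}$). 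Thus the condition $d(\mathcal{C}[\overline{R}])\ge t+1$ translates, via Wei's duality theorem for generalized Hamming weights, into a statement about the weight hierarchy of $\mathcal{C}^{\perp}[[\overline{R}]]$.

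First I would make the reduction precise. By the Singleton-type consequence of the weight hierarchy, $d(\mathcal{C}[\overline{R}])\ge t+1$ forces $\dim \mathcal{C}[\overline{R}]\le \#\overline{R}-t$, hence $\dim \mathcal{C}[\overline{R}]^{\perp}\ge t$, so $\mathcal{C}^{\perp}[[\overline{R}]]$ has dimension at least $t$. Inside $\mathcal{C}^{\perp}[[\overline{R}]]$ pick any $t$-dimensional subcode $E'$. Now view $E'$ as sitting in $\mathbb{F}_q^n$ by extending with zeros off $\overline{R}$: because words of $\mathcal{C}^{\perp}[[\overline{R}]]$ are, by definition of the shortened code, projections of words of $\mathcal{C}^{\perp}$ whose support lies in $\overline{R}$, this zero-extension $E$ is a $t$-dimensional subcode of $\mathcal{C}^{\perp}$ with $\mathrm{supp}(E)\subseteq\overline{R}$. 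Therefore
\[
d_t(\mathcal{C}^{\perp})\le \#\mathrm{supp}(E)\le \#\overline{R}=r_t+1,
\]
which is exactly the claimed inequality $r_t\ge d_t(\mathcal{C}^{\perp})-1$.

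The one point that needs care — and the only real obstacle — is producing a genuinely $t$-dimensional subcode of $\mathcal{C}^{\perp}$ supported inside $\overline{R}$, i.e. making sure the lower bound $\dim\mathcal{C}^{\perp}[[\overline{R}]]\ge t$ is correct. The cleanest route is to invoke Wei's duality theorem: for a code $\mathcal{D}$ of length $m$ and dimension $\kappa$, the weight hierarchies of $\mathcal{D}$ and $\mathcal{D}^{\perp}$ partition $\{1,\dots,m\}$ via $\{d_j(\mathcal{D})\}\sqcup\{m+1-d_j(\mathcal{D}^{\perp})\}$. Applying this to $\mathcal{D}=\mathcal{C}[\overline{R}]$ with $m=\#\overline{R}$: since $d_1(\mathcal{D})=d(\mathcal{C}[\overline{R}])\ge t+1$, the values $1,2,\dots,t$ are missing from $\{d_j(\mathcal{D})\}$, hence all lie in $\{m+1-d_j(\mathcal{D}^{\perp})\}$, which forces $\mathcal{D}^{\perp}=\mathcal{C}^{\perp}[[\overline{R}]]$ to have dimension $\ge t$ and in fact $d_j(\mathcal{C}^{\perp}[[\overline{R}]])\le \#\overline{R}$ for $j=1,\dots,t$. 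An equivalent, more elementary argument avoids Wei entirely: $d(\mathcal{C}[\overline{R}])\ge t+1$ means any $t$ coordinates can be erased from $\mathcal{C}[\overline{R}]$ without loss, so the projection of $\mathcal{C}[\overline{R}]$ onto any $(\#\overline{R}-t)$ coordinates is injective, giving $\dim\mathcal{C}[\overline{R}]\le\#\overline{R}-t$ and then $\dim\mathcal{C}^{\perp}[[\overline{R}]]=\#\overline{R}-\dim\mathcal{C}[\overline{R}]\ge t$ directly. Either way one then zero-extends and concludes as above.
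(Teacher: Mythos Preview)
Your proof is correct and follows essentially the same route as the paper's: take $\overline{R}$ with $\#\overline{R}\le r_t+1$ and $d(\mathcal{C}[\overline{R}])\ge t+1$, apply the Singleton bound to get $\dim\mathcal{C}[\overline{R}]\le\#\overline{R}-t$, use the duality $\mathcal{C}[\overline{R}]^{\perp}=\mathcal{C}^{\perp}[[\overline{R}]]$ to obtain $\dim\mathcal{C}^{\perp}[[\overline{R}]]\ge t$, and conclude $d_t(\mathcal{C}^{\perp})\le\#\overline{R}$. Your detour through Wei's duality theorem is unnecessary (your ``more elementary'' alternative is exactly what the paper does), and your explicit zero-extension of the $t$-dimensional subcode just spells out the step the paper leaves implicit.
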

\begin{proof}
First note that $d-1\le \dim(\mathcal{C}^{\perp})$.  Let $\overline{R}$ be a set of coordinates such that $\# \overline{R}\le r_t+1$ and $d(\mathcal{C}[\overline{R}])\ge t+1$. According to the Singleton bound, we have $\dim(C[\overline{R}])\le \# \overline{R}-t$. Since $\mathcal{C}[\overline{R}]^{\perp}=\mathcal{C}^{\perp}[[\overline{R}]]$ (see \cite{PWBJ}, Proposition 3.1.17), it holds that $\dim(\mathcal{C}^{\perp}[[\overline{R}]])\ge t$. Thus  $d_t(\mathcal{C}^{\perp}) \le \#\overline{R}$ and the result follows.
\end{proof}

The above result can be stated in terms of localities $(r,\delta)$ as follows.

\begin{cor}
Let $\mathcal{C}$ be an LRC code with locality $(r,\delta)$. Then the following inequality holds
\[
r+ \delta \geq d_{\delta -1} (\mathcal{C}^\perp) +1.
\]
\end{cor}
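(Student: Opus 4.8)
The plan is to deduce the inequality directly from Proposition \ref{dtdual}, using the dictionary between the two formulations of local recoverability recorded just before inequality (\ref{Singdt}). Recall from there that $\mathcal{C}$ has locality $(\rho,\delta)$ if and only if $r_{\delta-1}\le \rho+\delta-2$. So the first step is simply to read off that, under the hypothesis that $\mathcal{C}$ has locality $(r,\delta)$, one has $r_{\delta-1}\le r+\delta-2$.

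The second step is to apply Proposition \ref{dtdual} with $t=\delta-1$. This is legitimate: the localities $(r,\delta)$ attached to $\mathcal{C}$ range over $\delta\in\{2,3,\dots,d\}$, so $t=\delta-1\in\{1,2,\dots,d-1\}$ lies in the admissible range of that proposition. It yields $r_{\delta-1}\ge d_{\delta-1}(\mathcal{C}^\perp)-1$.

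The final step is to chain the two inequalities: $d_{\delta-1}(\mathcal{C}^\perp)-1\le r_{\delta-1}\le r+\delta-2$, and rearranging gives $r+\delta\ge d_{\delta-1}(\mathcal{C}^\perp)+1$, which is the claim.

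I do not expect any real obstacle here, since the substantive content sits in Proposition \ref{dtdual} (whose proof in turn uses only the Singleton bound and the identity $\mathcal{C}[\overline{R}]^\perp=\mathcal{C}^\perp[[\overline{R}]]$). The one point that needs care is the off-by-one bookkeeping in passing between the $(r,\delta)$ description, where a recovery set has size at most $r+\delta-1$, and the $r_t$ description, where it has size at most $r_t+1$; keeping these normalizations straight is the only place an error could creep in.
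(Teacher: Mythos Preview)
Your proof is correct and follows essentially the same route as the paper: translate locality $(r,\delta)$ into the bound $r_{\delta-1}\le r+\delta-2$, then apply Proposition~\ref{dtdual} with $t=\delta-1$. Your version is in fact slightly cleaner, since the paper interposes the claim $r_{\delta-1}=r+\delta-2$, which is neither needed nor obviously justified by the stated hypothesis; the inequality $r_{\delta-1}\le r+\delta-2$ alone already suffices, exactly as you use it.
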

\begin{proof}
From the definition of locality $(r,\delta)$ we have $r_{\delta-1} \leq r+\delta-2$. Since $r_t<r_{t+1}$ for all $1 \leq t \leq d-2$, we deduce $r_{\delta-1} = r+\delta-2$. Then,  Proposition \ref{dtdual} gives $r+ \delta \geq d_{\delta -1} (\mathcal{C}^\perp) +1$.
\end{proof}

\section{$J$-affine variety codes giving LRC codes}
\label{secdos}

In this section we  show that subfield-subcodes of some  codes arising from $J$-affine varieties are LRC codes with good recovery properties. We keep the notations as in the previous sections. In particular our LRC codes will be defined over the finite field $\mathbb{F}_q$, where $q=p^s$ and $p$ is a prime number. We shall consider an extension field $\mathbb{F}_Q$ of $\mathbb{F}_q$, where $Q=p^\ell$ and $s$ divides $\ell$. The affine varieties we manage, and so the codes arising from them, will be defined over $\mathbb{F}_Q$.  Subfield-subcodes of these codes will be defined over $\mathbb{F}_q$.

The concept of $J$-affine variety code  was introduced in \cite{QINP} and used in \cite{QINP2, LCD}  for constructing quantum and LCD codes with good parameters.  In the first subsection we recall
the construction of $J$-affine variety codes over $\mathbb{F}_Q$ and their subfield-subcodes over $\mathbb{F}_q$.

\subsection{$J$-affine variety codes and their subfield-subcodes}
\label{afine}

Let $\mathbb{F}_q$,  $q=p^s$,  be a finite field and
let $\mathbb{F}_Q$,  $Q=p^\ell$,  be an extension field of $\mathbb{F}_q$.
Let $\mathfrak{R}:= \mathbb{F}_Q[X_1, X_2, \ldots,X_m]$ be the polynomial ring in $m \geq 1$ variables over  $\mathbb{F}_Q$.  For simplicity we will often write the monomial
 $X_1^{a_1} X_2^{a_2}\cdots X_{m}^{a_m}\in \mathfrak{R}$  as $X^{\boldsymbol{a}}$, with  $\boldsymbol{a}=(a_1,a_2, \ldots, a_m) $.
Fix positive integers $N_j>1$, $j=1,2,\dots, m$, such that  $N_j-1$ divides $Q-1$. Let $J$ be a subset of indices of  variables, $J\subseteq \{1,2, \ldots, m\}$, and let  $I_J$ be the ideal of
$\mathfrak{R}$ generated by the binomials
$X_j^{N_j -1} - 1$ if $j \in J$, and $X_j^{N_j} - X_j$ if $j \not \in J$.  Denote by $\mathfrak{R}_J$ the
quotient ring $\mathfrak{R}_J=\mathfrak{R}/I_J$. Set $T_j = N_j -2$  if $j \in J$ and $T_j = N_j -1$ otherwise, and let
$$
\mathcal{H}_J  : = \{0,1,\ldots,T_1\}\times \{0,1,\ldots,T_2\} \times\cdots\times\{0,1,\ldots,T_m\}.
$$
Let $Z_J = \{P_1, P_2, \ldots, P_{n_J}\}$ be the set of zeros of $I_J$ over $\mathbb{F}_Q$. This set has cardinality
$$
n_J = \prod_{j \notin J} N_j \prod_{j \in J} (N_j -1).
$$
Consider the  well-defined evaluation map
$$
\mathrm{ev}_J: \mathfrak{R}_J \rightarrow \mathbb{F}_{Q}^{n_J} \; ,  \;
\mathrm{ev}_J(f) = (f(P_1), f(P_2),\ldots, f(P_{n_J})),
$$
where $f$ denotes both  the polynomial in $\mathfrak{R}$ and its corresponding equivalence class in $\mathfrak{R}_J$.

\begin{de}\label{def:unouno}
{\rm Given a non-empty subset $\Delta\subseteq \mathcal{H}_J$, the {\it $J$-affine variety code $E^J_\Delta$,} is the linear  subspace $E^J_\Delta:=\langle \mathrm{ev}_J (X^{\boldsymbol{a}}) \, : \, \boldsymbol{a} \in \Delta \rangle \subseteq   \mathbb{F}_Q^{n_J}$.}
\end{de}

Then $E^J_\Delta$ is a linear code over $\mathbb{F}_Q$. Its length is $n_J$ and its dimension equals the cardinality of $\Delta$, since $\mathrm{ev}_J$ is injective, \cite{QINP}.  Recall that $q=p^s$ where $s$ divides $\ell$ and thus $\mathbb{F}_q$ is a subfield of $\mathbb{F}_Q$.

\begin{de}\label{def:unodos}
{\rm  The {\it  subfield-subcode} of $E^J_\Delta$ over the field $\mathbb{F}_q$, denoted $\mathcal{C}_\Delta^{J}$,  is the linear code $\mathcal{C}_\Delta^{J}:= E_\Delta^J \cap \mathbb{F}_{q}^{n_J}$.
}
\end{de}

In order to study the codes $\mathcal{C}_\Delta^{J}$,
we shall manage  the elements of $\mathcal{H}_J$   in a particular manner. Let $j$,  $1 \leq j \leq m$. If $j \in J$ then we identify the set  $\{0,1, \ldots, T_j\}$ with   the ring $\mathbb{Z}/(T_j +1) \mathbb{Z}$.
When $j \not \in J$,   we identify the set  $\{1, 2,\ldots, T_j\}$ with $\mathbb{Z}/T_j  \mathbb{Z}$, and we extend the addition and multiplication of this ring to $\{0,1, \ldots, T_j\}$, by setting $0+\alpha=\alpha$, $0\cdot\alpha=0$ for all $\alpha=0,1,\dots,T_j$. The reason that explains these different ways of treating $\{0,1, \ldots, T_j\}$ is the fact that the evaluation of monomials containing $X_j^0$ or containing $X_j^{N_j-1}$  may be different when $j\not\in J$, see \cite{QINP2} for details.

Under the above conventions,
a set $\mathfrak{S}\subseteq \mathcal{H}_J$  is a {\it cyclotomic set  with respect to $q$} if $q \boldsymbol{y} \in \mathfrak{S}$ for all $\boldsymbol{y} = (y_1, y_2, \ldots, y_m) \in \mathfrak{S}$.  Minimal cyclotomic sets are those of the form $\mathfrak{I}=\{ q^{i } \boldsymbol{y} \, : \, i \geq 0\}$, for some element $\boldsymbol{y} \in \mathcal{H}_J$.
For each minimal cyclotomic set $\mathfrak{I}$,  we consider a unique representative $\boldsymbol{a} = (a_1, a_2, \ldots, a_m)\in \mathfrak{I}$,  constructed iteratively as follows:
$a_1=\min\{ y_1 :  (y_1, y_2, \ldots, y_m) \in \mathfrak{I} \}$, and
$a_j=\min\{ y_j :  (a_1, a_2, \ldots,a_{j-1},y_j,\ldots, y_m) \in \mathfrak{I} \}$ for $j=2,3,\dots,m$. We shall denote by $\mathfrak{I}_{\boldsymbol{a}}$ the minimal cyclotomic set with representative $\boldsymbol{a}$ and by $i_{\boldsymbol{a}}$ the cardinality of $\mathfrak{I}_{\boldsymbol{a}}$. Thus $\mathfrak{I}_{\boldsymbol{a}}=\{ \boldsymbol{a},q\boldsymbol{a},\dots,q^{(i_{\boldsymbol{a}} -1)}\boldsymbol{a} \}$.

Let $\mathcal{A}$ be the set of representatives of all minimal cyclotomic sets in $\mathcal{H}_J$. Given a non-empty subset $\Delta\subseteq\mathcal{H}_J$, we define
$\mathcal{A}(\Delta)=\{  \boldsymbol{a} \in \mathcal{A} \, : \, \mathfrak{I}_{\boldsymbol{a}} \subseteq \Delta \}$.
The set $\Delta$ is called {\em closed} if it is a union of minimal cyclotomic sets, that is, if
$$
\Delta=\bigcup_{\boldsymbol{a} \in\mathcal{A}(\Delta)} \mathfrak{I}_{\boldsymbol{a}}.
$$

An important tool to study subfield-subcodes is the trace map.
Since we are interested in subfield-subcodes over $\mathbb{F}_{q}$ of evaluation codes over $\mathbb{F}_{Q}$, for $\boldsymbol{a} \in \mathcal{A}$ we consider the map
$$
\mathcal{T}_{\boldsymbol{a} }: \mathfrak{R}_J \rightarrow \mathfrak{R}_J \; , \;
\mathcal{T}_{\boldsymbol{a}} (f) = f + f^{q} + \cdots + f^{q^{(i_{\boldsymbol{a}} -1)}}.
$$
Let $\xi_{\boldsymbol{a}}$ be a fixed primitive element of the field $\mathbb{F}_{q^{ i_{\boldsymbol{a}}}}$. The next result gives an explicit description of the code  $\mathcal{C}_\Delta^{J}$. It   extends Theorem 4 in \cite{galindo-hernando}.  Here we state the result for any set $J \subseteq \{1,2, \ldots, m\}$, while in \cite{galindo-hernando} only the case $J=\{1,2, \ldots, m\}$ was considered.

\begin{teo}
\label{ddimension}
With the above notation, if $\Delta\subseteq\mathcal{H}_J$  then the set of vectors
$$
\bigcup_{\boldsymbol{a} \in \mathcal{A}(\Delta)}
\left\{  \mathrm{ev}_J(\mathcal{T}_{\boldsymbol{a}} (\xi_{\boldsymbol{a}}^{k} X^{\boldsymbol{a}})) \, : \,  0 \leq k \leq i_{\boldsymbol{a}} -1 \right\}
$$
is a basis of $\mathcal{C}_\Delta^{J}$ over $\mathbb{F}_{q}$. In particular, if $\Delta$ is a closed set, then  $\dim(\mathcal{C}_\Delta^{J})=\dim(E_\Delta^{J})=\#\Delta$.

\end{teo}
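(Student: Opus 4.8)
The plan is to reduce this to a linear-algebra statement about the trace map $\mathcal{T}_{\boldsymbol{a}}$ acting on the one-dimensional $\mathbb{F}_Q$-spaces spanned by the monomials $X^{\boldsymbol{a}}$. Recall the classical Delsarte-type description of a subfield-subcode: $\mathcal{C}_\Delta^J = E_\Delta^J \cap \mathbb{F}_q^{n_J}$ is spanned over $\mathbb{F}_q$ by the images under the (coordinatewise) trace $\mathrm{Tr}_{\mathbb{F}_Q/\mathbb{F}_q}$ of the codewords of $E_\Delta^J$. The key structural fact, already implicit in the set-up, is that the evaluation map $\mathrm{ev}_J$ is injective on $\mathfrak{R}_J$, and that raising a function to the $q$-th power permutes the coordinates of $\mathbb{F}_Q^{n_J}$ in a way compatible with the cyclotomic action: $\mathrm{ev}_J(f^q)$ is obtained from $\mathrm{ev}_J(f)$ by applying the Frobenius $\mathbb{F}_Q\to\mathbb{F}_Q$ coordinatewise (since the $P_i$ have coordinates that are roots of the defining binomials, Frobenius permutes $Z_J$ trivially on the relevant powers). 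Consequently, for a monomial $X^{\boldsymbol{a}}$ we have $\mathrm{ev}_J(X^{\boldsymbol{a}})^q = \mathrm{ev}_J(X^{q\boldsymbol{a}})$ inside $\mathfrak{R}_J$, so the $\mathbb{F}_Q$-span $\langle \mathrm{ev}_J(X^{\boldsymbol{b}}) : \boldsymbol{b}\in\mathfrak{I}_{\boldsymbol{a}}\rangle$ is Frobenius-stable, i.e.\ it is defined over $\mathbb{F}_q$ as an $\mathbb{F}_Q$-space.

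The main steps I would carry out are as follows. First, reduce to the closed case: if $\Delta$ is arbitrary, observe that $\mathcal{C}_\Delta^J$ depends only on the largest closed subset $\widetilde{\Delta}=\bigcup_{\boldsymbol{a}\in\mathcal{A}(\Delta)}\mathfrak{I}_{\boldsymbol{a}}$ contained in $\Delta$; indeed any $\mathbb{F}_q$-valued codeword, being Frobenius-fixed, that lies in $E_\Delta^J$ must already lie in $E_{\widetilde\Delta}^J$ because the monomials $\mathrm{ev}_J(X^{\boldsymbol{b}})$, $\boldsymbol{b}\in\mathcal{H}_J$, are linearly independent over $\mathbb{F}_Q$ and Frobenius permutes them along cyclotomic orbits. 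So it suffices to treat closed $\Delta$, and we may decompose $\Delta$ as the disjoint union of its constituent minimal cyclotomic sets. Second, by the injectivity of $\mathrm{ev}_J$ and the orbit decomposition, $E_\Delta^J = \bigoplus_{\boldsymbol{a}\in\mathcal{A}(\Delta)} V_{\boldsymbol{a}}$ where $V_{\boldsymbol{a}} = \langle \mathrm{ev}_J(X^{q^i\boldsymbol{a}}) : 0\le i\le i_{\boldsymbol{a}}-1\rangle$, each $V_{\boldsymbol{a}}$ is an $i_{\boldsymbol{a}}$-dimensional $\mathbb{F}_Q$-space that is Frobenius-stable, and the whole sum is Frobenius-stable. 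Hence $\mathcal{C}_\Delta^J = (E_\Delta^J)^{\mathrm{Frob}} = \bigoplus_{\boldsymbol{a}} V_{\boldsymbol{a}}^{\mathrm{Frob}}$, and by the standard fact that an $\mathbb{F}_Q$-space of dimension $d$ defined over $\mathbb{F}_q$ has an $\mathbb{F}_q$-form of dimension $d$, we get $\dim_{\mathbb{F}_q} \mathcal{C}_\Delta^J = \sum_{\boldsymbol{a}\in\mathcal{A}(\Delta)} i_{\boldsymbol{a}} = \#\Delta$, which already settles the dimension claim in the closed case.

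Third, I would exhibit the explicit basis. Fix $\boldsymbol{a}\in\mathcal{A}(\Delta)$ and work inside $V_{\boldsymbol{a}}$. The map sending $f\in\mathbb{F}_{q^{i_{\boldsymbol{a}}}}$ to $\mathrm{ev}_J(f X^{\boldsymbol{a}})$ identifies a natural copy of the field $\mathbb{F}_{q^{i_{\boldsymbol{a}}}}$ inside $V_{\boldsymbol{a}}$ on which Frobenius acts as the $q$-power map on $\mathbb{F}_{q^{i_{\boldsymbol{a}}}}$; here one must use that $i_{\boldsymbol{a}}$ is exactly the orbit length, so $\xi_{\boldsymbol{a}}\in\mathbb{F}_{q^{i_{\boldsymbol{a}}}}$ and $\mathbb{F}_{q^{i_{\boldsymbol{a}}}}\subseteq\mathbb{F}_Q$. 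Then $\mathcal{T}_{\boldsymbol{a}}(f X^{\boldsymbol{a}}) = \sum_{i=0}^{i_{\boldsymbol{a}}-1}(fX^{\boldsymbol{a}})^{q^i}$ corresponds precisely to the field trace $\mathrm{Tr}_{\mathbb{F}_{q^{i_{\boldsymbol{a}}}}/\mathbb{F}_q}(f)$ of the coefficient, so $V_{\boldsymbol{a}}^{\mathrm{Frob}}$ is spanned over $\mathbb{F}_q$ by $\{\mathrm{ev}_J(\mathcal{T}_{\boldsymbol{a}}(\xi_{\boldsymbol{a}}^k X^{\boldsymbol{a}})) : 0\le k\le i_{\boldsymbol{a}}-1\}$. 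The fact that these $i_{\boldsymbol{a}}$ vectors are $\mathbb{F}_q$-linearly independent is the classical statement that $\{1,\xi_{\boldsymbol{a}},\dots,\xi_{\boldsymbol{a}}^{i_{\boldsymbol{a}}-1}\}$ is an $\mathbb{F}_q$-basis of $\mathbb{F}_{q^{i_{\boldsymbol{a}}}}$ together with nondegeneracy of the trace form, via the standard duality argument. Taking the union over all $\boldsymbol{a}\in\mathcal{A}(\Delta)$ and using the direct-sum decomposition gives the asserted basis of $\mathcal{C}_\Delta^J$.

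The step I expect to be the main obstacle is the careful bookkeeping of the two different conventions for the coordinates (the identification of $\{0,1,\dots,T_j\}$ with $\mathbb{Z}/T_j\mathbb{Z}$ versus $\mathbb{Z}/(T_j+1)\mathbb{Z}$ depending on whether $j\in J$), and verifying that Frobenius on $\mathbb{F}_Q^{n_J}$ really does act on the monomial basis of $\mathfrak{R}_J$ by $\mathrm{ev}_J(X^{\boldsymbol{a}})^q = \mathrm{ev}_J(X^{q\boldsymbol{a}})$ where $q\boldsymbol{a}$ is reduced modulo the appropriate $T_j$ or $T_j+1$ in each coordinate. This is where the difference from the $J=\{1,\dots,m\}$ case of \cite{galindo-hernando} enters: one must check that the exponent-reduction rules (in particular the behavior of the exponent $0$ and the exponent $N_j-1$ when $j\notin J$) are exactly the ones making the cyclotomic action on $\mathcal{H}_J$ compatible with Frobenius on codewords. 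Once this compatibility is in place, the rest is the standard subfield-subcode/trace argument applied orbit by orbit.
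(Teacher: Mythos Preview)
The paper does not actually prove this theorem: it states that the proof is similar to that of Theorem~4 in \cite{galindo-hernando} and omits it. Your proposal is a correct and complete sketch of exactly that standard argument---reduce to closed $\Delta$, decompose $E_\Delta^J$ as a Frobenius-stable direct sum $\bigoplus_{\boldsymbol{a}} V_{\boldsymbol{a}}$ along minimal cyclotomic sets, apply Galois descent to get $\dim_{\mathbb{F}_q}(V_{\boldsymbol{a}}^{\mathrm{Frob}})=i_{\boldsymbol{a}}$, and exhibit the explicit basis via the trace---and you have correctly isolated the only point where the extension from $J=\{1,\dots,m\}$ to general $J$ requires work, namely checking that $\mathrm{ev}_J(X^{\boldsymbol{a}})^q=\mathrm{ev}_J(X^{q\boldsymbol{a}})$ under the mixed reduction conventions for exponents when $j\notin J$.

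One small imprecision worth tightening: in your third step you write that $\mathcal{T}_{\boldsymbol{a}}(fX^{\boldsymbol{a}})$ ``corresponds precisely to the field trace $\mathrm{Tr}_{\mathbb{F}_{q^{i_{\boldsymbol{a}}}}/\mathbb{F}_q}(f)$ of the coefficient.'' That is not literally true---$\mathcal{T}_{\boldsymbol{a}}(fX^{\boldsymbol{a}})$ is a polynomial whose coefficient on $X^{\boldsymbol{a}}$ is $f$, not $\mathrm{Tr}(f)$. What you actually need (and what your subsequent independence argument uses correctly) is just that the coefficient of $X^{\boldsymbol{a}}$ in $\sum_k c_k\,\mathcal{T}_{\boldsymbol{a}}(\xi_{\boldsymbol{a}}^kX^{\boldsymbol{a}})$ is $\sum_k c_k\xi_{\boldsymbol{a}}^k$, so linear independence follows from $\{1,\xi_{\boldsymbol{a}},\dots,\xi_{\boldsymbol{a}}^{i_{\boldsymbol{a}}-1}\}$ being an $\mathbb{F}_q$-basis of $\mathbb{F}_{q^{i_{\boldsymbol{a}}}}$, and spanning then follows from your dimension count in step~2. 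With that wording adjusted, the argument is complete.
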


The proof of Theorem \ref{ddimension}  is similar to that of Theorem 4 in \cite{galindo-hernando}
and we omit it. Instead we show an example illustrating this theorem.

\begin{exa}{\rm
Take $p=2$, $s=3$, $\ell=6$ and $m=2$, so  $q=2^3=8$ and $Q= 2^6=64$. Take $J=\{1\}$, $N_1=8$ and $N_2=10$, so that $T_1=6$ and $T_2=9$.
Let $\boldsymbol{a}_1=(1,2), \boldsymbol{a}_2=(2,3)$ and $\boldsymbol{a}_3=(1,3)$.
Then $\mathfrak{I}_{\boldsymbol{a}_1}=\{(1,2),(1,7)\}, \mathfrak{I}_{\boldsymbol{a}_2}=\{(2,3),(2,6)\}$ and
$\mathfrak{I}_{\boldsymbol{a}_3}=\{(1,3),(1,6)\}$, hence $i_{\boldsymbol{a}_1}= i_{\boldsymbol{a}_2}=i_{\boldsymbol{a}_3}=2$.
Let $\Delta_1 = \mathfrak{I}_{\boldsymbol{a}_1} \cup \mathfrak{I}_{\boldsymbol{a}_2}$ and
$\Delta_2 = \mathfrak{I}_{\boldsymbol{a}_1} \cup \mathfrak{I}_{\boldsymbol{a}_2} \cup \{ (1,3)\}$. Thus $\Delta_1$ is closed but $\Delta_2$ is not.
Consider the affine variety codes $E_{\Delta_1}^J, E_{\Delta_2}^J$ defined over $\mathbb{F}_{64}$  and the subfield-subcodes
$\mathcal{C}_{\Delta_1}^J, \mathcal{C}_{\Delta_2}^J$ over $\mathbb{F}_{8}$.  All of them have length $n_J=70$. Furthermore
$\dim(E_{\Delta_1}^J)=4, \dim(E_{\Delta_2}^J)=5$. In fact we have
\begin{eqnarray*}
E_{\Delta_1}^J&=&\langle
\mathrm{ev}_J(X_1 X_2^2),  \mathrm{ev}_J(X_1 X_2^7), \mathrm{ev}_J( X_1^2 X_2^3), \mathrm{ev}_J(X_1^2 X_2^6)\rangle, \\
E_{\Delta_2}^J&=&\langle
\mathrm{ev}_J(X_1 X_2^2),  \mathrm{ev}_J(X_1 X_2^7), \mathrm{ev}_J( X_1^2 X_2^3), \mathrm{ev}_J(X_1^2 X_2^6),  \mathrm{ev}_J(X_1 X_2^3)\rangle
\end{eqnarray*}
over $\mathbb{F}_{64}$.  And from Theorem \ref{ddimension}
\begin{multline*}
\mathcal{C}_{\Delta_1}^J = \mathcal{C}_{\Delta_2}^J = \langle
\mathrm{ev}_J(X_1 X_2^2 + X_1 X_2^7),  \mathrm{ev}_J(\xi X_1 X_2^2 + \xi^8 X_1 X_2^7), \\
 \mathrm{ev}_J( X_1^2 X_2^3 + X_1^2 X_2^6),\mathrm{ev}_J( \xi X_1^2 X_2^3 + \xi^8 X_1^2 X_2^6) \rangle
\end{multline*}
over the field $\mathbb{F}_{8}$, where $\xi$ a primitive element of $\mathbb{F}_{64}$.
This example shows that when we  study the properties of a code $\mathcal{C}_{\Delta}^J $, we can always assume that the set $\Delta$, from which it arises, is closed.}
\end{exa}

\subsection{LRC codes from $J$-affine variety codes}
\label{subsect2}

In this subsection we present some specific families of $J$-affine variety codes whose subfield-subcodes are LRC codes. We determine  recovery sets for these LRC codes and show that their localities $(r, \delta)$  behave well.

Let us remember that the construction of $J$-affine variety codes begins by taking a set of indices $J\subseteq \{ 1,2,\dots,m\}$ and integers $N_1,N_2,\dots,N_m$, such that $N_j-1$ divides $Q-1$ for all $j$. In order to obtain good LRC codes, from now on we shall assume the additional property that $N_1,N_2,\dots,N_m$ have been chosen in a way that there exists a non-empty subset $L\subseteq J$ such that $q -1$ divides $N_j -1$ for all $j\in L$.
Throughout the rest of this section we shall assume that the integers $N_1,N_2,\dots,N_m$, and the sets $J$ and $L$, have been fixed satisfying the above conditions.

Let $\alpha$ and $\eta$ be  primitive elements of $\mathbb{F}_{Q}$ and $\mathbb{F}_{q}$, respectively. For $1 \leq j \leq m$,  let $\gamma_j = \alpha^{(Q-1)/(N_j -1)} \in \mathbb{F}_{Q}$. The following property will be used later.

\begin{lem}
\label{lema16}
Let $l$ and $n$ be two nonnegative integers.  If $j \in L$,  then the following equality holds in $\mathbb{F}_{Q}$,
\[
\left(\gamma_j^{l} \eta^n \right)^{N_j -1} = 1.
\]
\end{lem}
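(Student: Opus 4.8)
The claim is that for $j\in L$ and any nonnegative integers $l,n$, the element $\gamma_j^l\eta^n\in\mathbb{F}_Q$ satisfies $\bigl(\gamma_j^l\eta^n\bigr)^{N_j-1}=1$. The natural approach is to treat the two factors separately and then combine, using that $N_j-1$ divides $Q-1$ by the standing hypothesis on the $N_j$, and that $q-1$ divides $N_j-1$ for $j\in L$ by the extra hypothesis defining $L$.

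First I would handle $\gamma_j$. By definition $\gamma_j=\alpha^{(Q-1)/(N_j-1)}$ with $\alpha$ a primitive element of $\mathbb{F}_Q$, so $\gamma_j^{N_j-1}=\alpha^{Q-1}=1$; this already uses that $(Q-1)/(N_j-1)$ is an integer. Raising to the $l$-th power, $\bigl(\gamma_j^l\bigr)^{N_j-1}=\bigl(\gamma_j^{N_j-1}\bigr)^l=1$. Next I would handle $\eta$. Since $j\in L$ we may write $N_j-1=(q-1)k$ for some positive integer $k$. Because $\eta$ is a primitive element of $\mathbb{F}_q$, it has order $q-1$ in $\mathbb{F}_q^\times\subseteq\mathbb{F}_Q^\times$, so $\eta^{q-1}=1$ and therefore $\eta^{N_j-1}=\bigl(\eta^{q-1}\bigr)^k=1$; raising to the $n$-th power gives $\bigl(\eta^n\bigr)^{N_j-1}=1$.

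Finally I would multiply: $\bigl(\gamma_j^l\eta^n\bigr)^{N_j-1}=\bigl(\gamma_j^l\bigr)^{N_j-1}\bigl(\eta^n\bigr)^{N_j-1}=1\cdot 1=1$, which is the assertion. The only subtlety worth a word is that the identities live in $\mathbb{F}_Q$ and not merely in $\mathbb{F}_q$; this is fine because $\mathbb{F}_q$ is a subfield of $\mathbb{F}_Q$ (as $s\mid\ell$) and $\gamma_j$ is by construction an element of $\mathbb{F}_Q$, so all the exponentiations take place in the single group $\mathbb{F}_Q^\times$. There is essentially no obstacle here: the proof is a two-line divisibility argument, and the only thing to be careful about is invoking the correct divisibility hypothesis for each factor — $N_j-1\mid Q-1$ for the $\gamma_j$ part, and $q-1\mid N_j-1$ (valid precisely because $j\in L$) for the $\eta$ part.
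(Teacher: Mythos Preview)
Your proof is correct and follows essentially the same approach as the paper: the paper compresses the argument into a single chain of equalities, but uses exactly the two divisibility facts you isolate, namely $\gamma_j^{N_j-1}=\alpha^{Q-1}=1$ and $\eta^{N_j-1}=(\eta^{q-1})^{(N_j-1)/(q-1)}=1$.
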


\begin{proof}
The statement follows from the chain of equalities
\[
\left(\gamma_j^{l} \eta^n \right)^{N_j -1} = \left(\alpha^{\frac{Q-1}{N_j -1}}\right)^{l (N_j -1)} \left(\eta^{N_j -1}\right)^n = \left(\alpha^{Q-1}\right)^l \left(\eta^{q -1}\right)^{n \frac{N_j-1}{q-1}}=1.
\]
\end{proof}

As defined in Subsection \ref{afine}, let $I_J$ be the ideal in $\mathfrak{R}$ generated by the binomials
$X_j^{N_j -1} - 1$ if $j \in J$ and $X_j^{N_j} - X_j$ if $j \not \in J$, and let
$Z_J = \{P_1, P_2, \ldots, P_{n_J}\}$ be the set of zeros of $I_J$ over $\mathbb{F}_Q$.
In this subsection we determine recovery sets for codes $\mathcal{C}_\Delta^{J}$. These recovery sets will be obtained from subsets $R\subset Z_J$ satisfying some geometrical properties. Given a point $P\in Z_J$ we set $\mbox{coord}(P):=j$ if $P=P_j$; consequently, given a set $R\subseteq Z_J$, we set $\mbox{coord}(R):=\{ \mbox{coord}(P) : P\in R\}$.

Given a nonzero element $\lambda \in \mathbb{F}_{Q}^*$ and a point $P\in \mathbb{F}_Q^m$, we define the product $\lambda \cdot_L P$ as the point of $\mathbb{F}_{Q}^m$ obtained by multiplying by $\lambda$ the coordinates of $P$ corresponding to positions in $L$ and leaving unchanged its remaining coordinates.

\begin{lem}
\label{lema16bis}
If $P\in Z_J$, then $\eta^n \cdot_L P \in Z_j$ for every nonnegative integer $n$.
\end{lem}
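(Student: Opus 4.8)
The goal is to show that if $P = (p_1, \dots, p_m) \in Z_J$ is a common zero of the ideal $I_J$, then so is the point $Q' := \eta^n \cdot_L P$ for every nonnegative integer $n$. Since $Z_J$ is the set of zeros of the generators of $I_J$, namely the binomials $X_j^{N_j-1} - 1$ for $j \in J$ and $X_j^{N_j} - X_j$ for $j \notin J$, it suffices to check that $Q'$ annihilates each of these binomials. The point $Q'$ has $j$-th coordinate equal to $\eta^n p_j$ if $j \in L$ and equal to $p_j$ if $j \notin L$; in particular, for every coordinate $j \notin L$ (which includes all $j \notin J$, since $L \subseteq J$), the coordinate is unchanged, so the corresponding defining equation is automatically satisfied because $P$ satisfied it.

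First I would dispose of the coordinates $j \notin L$: for $j \notin J$ the generator $X_j^{N_j} - X_j$ vanishes at $P$ and the $j$-th coordinate of $Q'$ equals that of $P$, so it still vanishes; similarly for $j \in J \setminus L$ the generator $X_j^{N_j-1} - 1$ is unaffected. The only coordinates that change are those $j \in L$, and for these $j \in J$, so the relevant generator is $X_j^{N_j-1} - 1$. Here I would apply Lemma~\ref{lema16}: since $P \in Z_J$, we have $p_j^{N_j-1} = 1$, and I need $(\eta^n p_j)^{N_j-1} = 1$. Writing $p_j = \gamma_j^{l}$ for an appropriate exponent $l$ — which is legitimate because $p_j$ lies in the group of $(N_j-1)$-st roots of unity in $\mathbb{F}_Q^*$, and $\gamma_j = \alpha^{(Q-1)/(N_j-1)}$ generates exactly that cyclic subgroup — Lemma~\ref{lema16} gives $\left(\gamma_j^{l}\eta^n\right)^{N_j-1} = 1$, which is precisely $(\eta^n p_j)^{N_j-1} = 1$.

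Alternatively, and perhaps more cleanly, I would avoid introducing $l$ altogether and compute directly: $(\eta^n p_j)^{N_j-1} = (\eta^n)^{N_j-1} \cdot p_j^{N_j-1} = (\eta^{q-1})^{n(N_j-1)/(q-1)} \cdot 1 = 1$, using that $q-1$ divides $N_j-1$ for $j \in L$ by the standing hypothesis on the $N_j$, that $\eta^{q-1} = 1$ since $\eta \in \mathbb{F}_q^*$, and that $p_j^{N_j-1}=1$ since $P \in Z_J$. This is essentially the computation in the proof of Lemma~\ref{lema16} specialized to the case at hand, so either citing Lemma~\ref{lema16} or reproducing this one-line chain of equalities completes the argument.

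There is no real obstacle here; the statement is a direct verification. The only thing to be careful about is the bookkeeping of which coordinates are modified by $\cdot_L$ versus which defining binomial applies to which index — in particular using $L \subseteq J$ to ensure that the modified coordinates are always of the "$j \in J$" type, so that the multiplicative relation $X_j^{N_j-1}=1$ (rather than $X_j^{N_j}=X_j$) is the one in force. I would also remark that this lemma is the key geometric input for building recovery sets: it shows that each $Z_J$ is stable under the coordinate-wise action of $\mathbb{F}_q^*$ on the $L$-positions, so orbits of this action will serve as the blocks of the recovery structure. (A minor typo to flag: the statement writes $Z_j$ where $Z_J$ is meant.)
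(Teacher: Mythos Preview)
Your proposal is correct and follows exactly the approach the paper indicates: the paper simply states that the lemma ``follows directly from the definition of $L$ and Lemma~\ref{lema16}'', and your argument is precisely the unpacking of that sentence, checking coordinate by coordinate and invoking Lemma~\ref{lema16} (or its one-line computation) for the $j\in L$ positions. Your observation about the typo $Z_j$ versus $Z_J$ is also apt.
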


The proof of this lemma follows directly from the definition of $L$ and Lemma \ref{lema16}.
We define the {\em orbit} of a point $P_{t_0}\in Z_J$ as the set
\begin{equation}
\label{pi}
R_{t_0} := \{  \eta^n \cdot_L P_{t_0} \, : \,  0  \leq n \leq q -2 \}.
\end{equation}
Notice that $R_{t_0} \subset Z_j$  by Lemma \ref{lema16bis}. As we shall see later, these orbits are closely related to recovery sets of our codes. For short, the point $\eta^n \cdot_L P_{t_0}$ will be denoted $P^L_{n,t_0}$.

Let $\mathcal{A}$ be the set of representatives of all minimal cyclotomic sets in $\mathcal{H}_J$, as defined in Subsection \ref{afine}. For $\boldsymbol{a}\in \mathcal{A}$ we write $\sigma_L(\boldsymbol{a})=\sum_{j \in L} a_j$, where the $a_j$'s and the sum $\sigma_L(\boldsymbol{a})$ are seen as integers.

\begin{lem}
\label{lema17}
Let $\boldsymbol{a}\in\mathcal{A}$ and let $k$ and $n$ be two integers such that $0 \leq k \leq i_{\boldsymbol{a}} -1$ and $0 \leq n \leq q-1$. Then we have
\begin{equation}
\label{el17}
\mathcal{T}_{\boldsymbol{a}} (\xi_{\boldsymbol{a}}^{k} X^{\boldsymbol{a}}) \left( P^L_{n,t_0} \right) = \eta^{n \sigma_L (\boldsymbol{a})} \mathcal{T}_{\boldsymbol{a}} (\xi_{\boldsymbol{a}}^{k} X^{\boldsymbol{a}}) \left( P_{t_0}\right).
\end{equation}
\end{lem}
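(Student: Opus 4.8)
The plan is to unwind both sides of (\ref{el17}) directly from the definitions, using multiplicativity of monomial evaluation together with Lemma \ref{lema16}. First I would write $P_{t_0} = (x_1, x_2, \ldots, x_m)$ and recall that, by definition of the action $\cdot_L$, the point $P^L_{n,t_0} = \eta^n \cdot_L P_{t_0}$ has $j$-th coordinate equal to $\eta^n x_j$ when $j \in L$ and equal to $x_j$ when $j \notin L$. Hence for any exponent vector $\boldsymbol{b} = (b_1, \ldots, b_m)$ we get
\[
X^{\boldsymbol{b}}\left(P^L_{n,t_0}\right) = \prod_{j=1}^m (\text{$j$-th coord of }P^L_{n,t_0})^{b_j} = \left(\prod_{j \in L}\eta^{n b_j}\right)\prod_{j=1}^m x_j^{b_j} = \eta^{n\sigma_L(\boldsymbol{b})}\, X^{\boldsymbol{b}}(P_{t_0}),
\]
where $\sigma_L(\boldsymbol{b}) = \sum_{j\in L} b_j$. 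This is the key identity: the monomial picks up exactly the factor $\eta^{n\sigma_L(\boldsymbol{b})}$ under the orbit action.

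Next I would observe that $\mathcal{T}_{\boldsymbol{a}}(\xi_{\boldsymbol{a}}^k X^{\boldsymbol{a}}) = \sum_{i=0}^{i_{\boldsymbol{a}}-1} \left(\xi_{\boldsymbol{a}}^k X^{\boldsymbol{a}}\right)^{q^i} = \sum_{i=0}^{i_{\boldsymbol{a}}-1} \xi_{\boldsymbol{a}}^{k q^i} X^{q^i \boldsymbol{a}}$, a linear combination of the monomials $X^{q^i\boldsymbol{a}}$ for $i = 0, 1, \ldots, i_{\boldsymbol{a}}-1$. Applying the key identity termwise with $\boldsymbol{b} = q^i\boldsymbol{a}$ gives $X^{q^i\boldsymbol{a}}(P^L_{n,t_0}) = \eta^{n\sigma_L(q^i\boldsymbol{a})} X^{q^i\boldsymbol{a}}(P_{t_0})$, so it remains to check that $\sigma_L(q^i\boldsymbol{a}) = \sigma_L(\boldsymbol{a})$ for every $i$; then the common factor $\eta^{n\sigma_L(\boldsymbol{a})}$ pulls out of the whole sum and yields (\ref{el17}). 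The equality $\sigma_L(q^i\boldsymbol{a}) = \sigma_L(\boldsymbol{a})$ is where Lemma \ref{lema16} (equivalently, the hypothesis that $q-1 \mid N_j-1$ for $j\in L$) enters: for $j \in L$ the coordinate $a_j$ of an element of $\mathcal{H}_J$ lives in $\mathbb{Z}/(T_j+1)\mathbb{Z} = \mathbb{Z}/(N_j-1)\mathbb{Z}$, and since $q-1$ divides $N_j-1$, multiplication by $q$ fixes the residue of $a_j$ modulo $q-1$; because $\eta$ has order $q-1$ in $\mathbb{F}_q^*$, only the residue of $\sigma_L$ modulo $q-1$ matters in the exponent of $\eta$, so $\eta^{n\sigma_L(q^i\boldsymbol{a})} = \eta^{n\sigma_L(\boldsymbol{a})}$.

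I expect the main (and really only) subtlety to be this last point about the exponent: one must be careful that $\sigma_L(\boldsymbol{a})$ is an \emph{integer} sum of the chosen representatives, while the relevant invariance is only modulo $q-1$, and that this is exactly enough since $\eta^{q-1}=1$. Equivalently one can phrase the whole computation so that Lemma \ref{lema16} is invoked in the form $(\eta^n)^{a_j (\text{something})}$ collapses appropriately; either way the argument is a short direct verification once the monomial-evaluation identity above is in place. Everything else — linearity of $\mathcal{T}_{\boldsymbol{a}}$, the Frobenius expansion, factoring a scalar out of a sum — is routine.
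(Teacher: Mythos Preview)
Your proposal is correct and follows essentially the same direct computation as the paper's proof. The only cosmetic difference is the order of operations: the paper factors out $\eta^{n\sigma_L(\boldsymbol{a})}$ from the monomial evaluation \emph{before} applying the $q^t$-th powers and then uses $\eta^{q^t}=\eta$ (since $\eta\in\mathbb{F}_q$), whereas you first expand the trace into the monomials $X^{q^i\boldsymbol{a}}$ and then argue that $\sigma_L(q^i\boldsymbol{a})\equiv\sigma_L(\boldsymbol{a})\pmod{q-1}$; these are the same fact viewed from two sides.
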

\begin{proof}
Since no coordinate of  $P_{t_0}$ in the positions of $L$ vanishes, we can write $P_{t_0} = (\gamma_1^{k_1}, \gamma_2^{k_2}, \ldots, \gamma_m^{k_m})$ without loss of generality.  Then
\begin{multline}
\mathcal{T}_{\boldsymbol{a}} (\xi_{\boldsymbol{a}}^{k} X^{\boldsymbol{a}}) \left(  P^L_{n,t_0} \right)
 = \sum_{t=0}^{i_{\boldsymbol{a}}-1} \left(\xi_{\boldsymbol{a}}^{k} \prod_{l \in L} (\eta^n \gamma_l^{k_l})^{a_l} \prod_{l \not \in L} ( \gamma_l^{k_l})^{a_l} \right)^{t q}
\\
= \eta^{n \sigma_L (\boldsymbol{a})}  \sum_{t=0}^{i_{\boldsymbol{a}}-1} \left(\xi_{\boldsymbol{a}}^{k} \prod_{l=1}^m ( \gamma_l^{k_l})^{a_l}\right)^{t q} = \eta^{n \sigma_L (\boldsymbol{a})} \mathcal{T}_{\boldsymbol{a}} (\xi_{\boldsymbol{a}}^{k} X^{\boldsymbol{a}}) \left( P_{t_0} \right)
\end{multline}
as stated.
\end{proof}

The $J$-affine variety code $E^J_\Delta$ was defined as the linear subspace  spanned by the vectors $\mathrm{ev}_J (X^{\boldsymbol{a}})$,  $\boldsymbol{a} \in \Delta$, where $\Delta$ is any non-empty subset of $\mathcal{H}_J$.
Taking advantage of Theorem \ref{ddimension}, from now on all the sets $\Delta$ we consider will be closed, that is a union of minimal cyclotomic sets, $\Delta = \cup_{l=1}^r \mathfrak{I}_{\boldsymbol{a}_l}$, with $\boldsymbol{a}_l\in\mathcal{A}$, $1\le l\le r$. Later in this article we shall impose even more restrictive conditions.

\begin{teo}
\label{el21}
Let $\Delta = \cup_{l=1}^r \mathfrak{I}_{\boldsymbol{a}_l}$, where $\{ \boldsymbol{a}_1,\boldsymbol{a}_2,\dots, \boldsymbol{a}_r\}$ is a subset of $\mathcal{A}$  with cardinality $r \leq q-2$. If the integers  $\sigma_L(\boldsymbol{a}_1),\sigma_L(\boldsymbol{a}_2),\dots, \sigma_L(\boldsymbol{a}_r)$ are pairwise different modulo   $q -1$, then the subfield-subcode  $\mathcal{C}_\Delta^{J}$ is an LRC code with locality $\le r$.
\end{teo}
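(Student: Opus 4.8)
The plan is to show that each orbit $R_{t_0}$ defined in (\ref{pi}), after removing one point, is a recovery set for that removed coordinate; since $\#R_{t_0}\le q-1$ this gives locality $\le q-2$, but in fact the hypothesis that the $\sigma_L(\boldsymbol{a}_l)$ are pairwise distinct modulo $q-1$ will let us recover any one coordinate of the orbit from only $r$ of the others. First I would fix a point $P_{t_0}\in Z_J$ and consider its orbit $R_{t_0}=\{P^L_{n,t_0}:0\le n\le q-2\}$, which by Lemma \ref{lema16bis} is contained in $Z_J$. The key observation is that, by Theorem \ref{ddimension}, a codeword $\boldsymbol{x}\in\mathcal{C}_\Delta^J$ is a linear combination over $\mathbb{F}_q$ of the basis vectors $\mathrm{ev}_J(\mathcal{T}_{\boldsymbol{a}_l}(\xi_{\boldsymbol{a}_l}^{k}X^{\boldsymbol{a}_l}))$, so by Lemma \ref{lema17} the restriction $\boldsymbol{x}_{R_{t_0}}$ of any codeword to the orbit has entries of the form
\[
x_{P^L_{n,t_0}}=\sum_{l=1}^{r} c_l\, \eta^{\,n\,\sigma_L(\boldsymbol{a}_l)},\qquad 0\le n\le q-2,
\]
where $c_l=\mathcal{T}_{\boldsymbol{a}_l}(\xi_{\boldsymbol{a}_l}^{k}X^{\boldsymbol{a}_l})(P_{t_0})$ (summing also over $k$, but this only changes the constants). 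Thus $\boldsymbol{x}_{R_{t_0}}$ lies in the span of the $r$ vectors $(\eta^{\,n\,\sigma_L(\boldsymbol{a}_l)})_{n=0}^{q-2}$.

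Next I would argue that these $r$ vectors are the codewords of (a subcode of) a Reed--Solomon-type code: writing $\beta_l:=\eta^{\sigma_L(\boldsymbol{a}_l)}$, the $n$-th coordinate is $\beta_l^{\,n}$, so $\mathcal{C}_\Delta^J[R_{t_0}]$ is contained in the cyclic/RS code generated by the evaluations $n\mapsto\beta_l^{n}$. Because the exponents $\sigma_L(\boldsymbol{a}_1),\dots,\sigma_L(\boldsymbol{a}_r)$ are pairwise distinct modulo $q-1=\#R_{t_0}$ (here one uses $\eta^{q-1}=1$ and $r\le q-2$), the elements $\beta_1,\dots,\beta_r$ are pairwise distinct; hence the corresponding generalized Reed--Solomon code has dimension exactly $r$ and minimum distance $q-1-r+1=q-r$. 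In particular any $r$ coordinates of such a word determine it completely, so from any $r$ of the entries of $\boldsymbol{x}_{R_{t_0}}$ one can reconstruct the coefficients and hence the remaining entries. Fixing a coordinate $i$ (corresponding to some $P^L_{n_0,t_0}$) and choosing $R$ to be any $r$ other points of $R_{t_0}$, this shows $\boldsymbol{c}_i\in\langle\boldsymbol{c}_j:j\in R\rangle$; equivalently $\dim(\mathcal{C}_\Delta^J[R])=\dim(\mathcal{C}_\Delta^J[\overline R])$, which is the defining property of a recovery set.

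The final step is bookkeeping: every point $P\in Z_J$ lies in its own orbit $R_{\mathrm{coord}(P)}$, and each orbit has size $q-1>r$, so every coordinate admits a recovery set of size $r$. Therefore $\mathcal{C}_\Delta^J$ is LRC with locality $\le r$. The main obstacle — and the place I would be most careful — is justifying that the span of the $r$ ``monomial-orbit'' vectors behaves like a full RS code of dimension $r$: one must verify that distinctness of the $\sigma_L(\boldsymbol{a}_l)$ modulo $q-1$ really yields distinct $\beta_l$ and that the relevant $r\times r$ submatrices (which are Vandermonde in the $\beta_l$) are nonsingular, so that $r$ coordinates genuinely suffice; the conditions $r\le q-2$ and that $L$ is chosen so that $q-1\mid N_j-1$ for $j\in L$ (via Lemma \ref{lema16}, ensuring the orbit stays inside $Z_J$ and has full length $q-1$) are exactly what make this work.
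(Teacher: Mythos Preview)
Your approach is the same as the paper's: restrict a codeword to the orbit $R_{t_0}$, use Lemma~\ref{lema17} to see that the restriction lies in the span of the vectors $(\eta^{n\sigma_L(\boldsymbol{a}_l)})_{n=0}^{q-2}$, and then invoke a Vandermonde argument to recover the erased coordinate from $r$ others. That is exactly what the paper does.

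There is, however, one genuine overclaim. You assert that the code spanned by the vectors $\bigl(\beta_l^{\,n}\bigr)_{n=0}^{q-2}$, with $\beta_l=\eta^{\sigma_L(\boldsymbol{a}_l)}$, is MDS with minimum distance $q-r$, so that \emph{any} $r$ coordinates of the orbit suffice for recovery. Under the hypotheses of Theorem~\ref{el21} this is false: the exponents $\sigma_L(\boldsymbol{a}_l)$ are only assumed pairwise distinct modulo $q-1$, not consecutive. For instance, with $q=5$, $r=2$ and exponents $0,2$, the codeword $(1,1,1,1)-(1,-1,1,-1)=(0,2,0,2)$ has weight $2<q-r=3$, so the two coordinates $n=0,2$ do \emph{not} determine the word. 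Equivalently, the $r\times r$ submatrix $\bigl(\beta_l^{\,n_i}\bigr)$ is a generalized Vandermonde matrix, and for arbitrary rows $n_1,\dots,n_r$ it can be singular. (The MDS conclusion you want is precisely Proposition~\ref{esmds}, which requires the extra hypothesis that the $\sigma_L(\boldsymbol{a}_l)$ are consecutive.)

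The fix is simple and is what the paper does: do not claim recovery from \emph{any} $r$ points of the orbit, but choose the $r$ points $P^L_{n_1,t_0},\dots,P^L_{n_r,t_0}$ with $n_1,\dots,n_r$ \emph{consecutive} (and $\ne 0$, so that $P_{t_0}$ itself is excluded). Then column $l$ of the coefficient matrix is $\beta_l^{n_1}(1,\beta_l,\dots,\beta_l^{r-1})^T$, so the matrix factors as a diagonal times a standard Vandermonde in the distinct $\beta_l$, hence is nonsingular. Since $r\le q-2$, the orbit (which indeed has size $q-1$ because $L\subseteq J$ forces the $L$-coordinates of $P_{t_0}$ to be nonzero) contains $r$ such consecutive points avoiding $t_0$, and this gives the required recovery set of size $r$.
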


\begin{proof}
Let $\boldsymbol{c} =\mathrm{ev}_J(h)$ be a codeword of $\mathcal{C}_\Delta^{J}$. By Theorem \ref{ddimension}, $h$ can be written as
\[
 h = h_{\boldsymbol{a}_1} + h_{\boldsymbol{a}_2} + \cdots + h_{\boldsymbol{a}_r},
 \]
where each $h_{\boldsymbol{a}_l}$ is a linear combination of polynomials of the form $\mathcal{T}_{\boldsymbol{a}_l} (\xi_{\boldsymbol{a}_l}^{k} X^{\boldsymbol{a}_l})$, $0 \leq k \leq i_{\boldsymbol{a}_l} -1$,
and coefficients in $\mathbb{F}_{q}$. Fix  a possition $t_0\in \{ 1,2,\dots,n\}$.
We shall show that the set $R=\{P^L_{n_i, t_0} \, : \, i=1,2,\dots,r\}$ of points corresponding to $r$ consecutive nonzero $n_i$'s, gives a recovery set $\mbox{coord}(R)$ for $t_0$. According to Lemma  \ref{lema16bis}, the points in $R$ belong to $Z_J$.   Assume we know the $r$ coordinates $ h\left(P^L_{n_i, t_0}\right)$, $i=1,2,\dots,r$, of $\boldsymbol{c}$.  By linearity
\begin{equation}
\label{de21}
h\left(P_{t_0}\right) = h_{\boldsymbol{a}_1} \left(P_{t_0}\right)+ h_{\boldsymbol{a}_2} \left(P_{t_0}\right)+ \cdots + h_{\boldsymbol{a}_r}\left(P_{t_0}\right).
\end{equation}
Then, from Lemma \ref{lema17} we get the equalities
\begin{align*}
h\left(\eta^{n_1} \cdot_L P_{t_0}\right) &= \eta^{n_1 \sigma_L (\boldsymbol{a}_1)} h_{\boldsymbol{a}_1} \left(P_{t_0}\right)+ \eta^{n_1 \sigma_L (\boldsymbol{a}_2)} h_{\boldsymbol{a}_2} \left(P_{t_0}\right)+ \cdots + \eta^{n_1 \sigma_L (\boldsymbol{a}_r)} h_{\boldsymbol{a}_r}\left(P_{t_0}\right), \\
h\left(\eta^{n_2} \cdot_L P_{t_0}\right) &= \eta^{n_2 \sigma_L (\boldsymbol{a}_1)} h_{\boldsymbol{a}_1} \left(P_{t_0}\right)+ \eta^{n_2 \sigma_L (\boldsymbol{a}_2)} h_{\boldsymbol{a}_2} \left(P_{t_0}\right)+ \cdots + \eta^{n_2 \sigma_L (\boldsymbol{a}_r)} h_{\boldsymbol{a}_r}\left(P_{t_0}\right),\\
 & \vdots\\
h\left(\eta^{n_r} \cdot_L P_{t_0}\right) &= \eta^{n_r \sigma_L (\boldsymbol{a}_1)} h_{\boldsymbol{a}_1} \left(P_{t_0}\right)+ \eta^{n_r \sigma_L (\boldsymbol{a}_2)} h_{\boldsymbol{a}_2} \left(P_{t_0}\right)+ \cdots + \eta^{n_r \sigma_L (\boldsymbol{a}_r)} h_{\boldsymbol{a}_r}\left(P_{t_0}\right).
\end{align*}
Write $\eta_i := \eta^{\sigma_L (\boldsymbol{a}_i)}$, $1 \leq i \leq r$. We have obtained the square system of linear equations
\[
\left(
    \begin{array}{ccc}
      \eta_1^{n_1} & \cdots & \eta_r^{n_1} \\
    \eta_1^{n_2} & \cdots & \eta_r^{n_2} \\
      \vdots & \ddots & \vdots \\
      \eta_1^{n_r} & \cdots & \eta_r^{n_r} \\
    \end{array}
  \right)
  \left(
    \begin{array}{c}
      h_{\boldsymbol{a}_1} \left(P_{t_0}\right) \\
      h_{\boldsymbol{a}_2} \left(P_{t_0}\right) \\
      \vdots \\
      h_{\boldsymbol{a}_r} \left(P_{t_0}\right) \\
    \end{array}
  \right)=
  \left(
    \begin{array}{c}
      h\left(\eta^{n_1} \cdot_L P_{t_0}\right) \\
      h\left(\eta^{n_2} \cdot_L P_{t_0}\right) \\
      \vdots \\
      h\left(\eta^{n_r} \cdot_L P_{t_0}\right) \\
    \end{array}
  \right).
\]
The matrix of this system is of Vandermonde type, and thus nonsingular.
Then the solution is unique and gives the values $h_{\boldsymbol{a}_i} \left(P_{t_0}\right)$, $1 \leq i \leq r$. Once these values are known, from (\ref{de21}) we can deduce $h\left(P_{t_0}\right)$.
\end{proof}

Under some supplementary conditions we can obtain LRC codes with larger dimension.
In the next theorem we restrict to the case $L=\{1\}$, so that $\sigma_L(\boldsymbol{a}_l)$ is the first coordinate of $\boldsymbol{a}_l$. We denote by $a_l$ such first coordinate.

\begin{teo}
\label{el25}
Let $L=\{1\}$.
Let $\Delta = \cup_{l=1}^{q-1} \mathfrak{I}_{\boldsymbol{a}_l}$, where $\{ \boldsymbol{a}_1,\boldsymbol{a}_2,\dots, \boldsymbol{a}_{q-1}\}$ is a subset of $\mathcal{A}$ such that the first coordinates  ${a}_1,{a}_2,\dots, a_{q-1}$ of $\boldsymbol{a}_1,\boldsymbol{a}_2,\dots, \boldsymbol{a}_{q-1}$, are pairwise different modulo   $q -1$. If there is an index $v$,  $1 \leq v \leq q-1$, for which the following conditions
\begin{enumerate}
\item $a_v$ divides $N_1-1$,
\item $\gcd(a_v,a_l)=1$ for all $1 \leq l \leq q-1$, $l \neq v$, and
\item $\gcd(a_v, q -1)=1$;
  \end{enumerate}
hold, then $\mathcal{C}_\Delta^{J}$ is an LRC code with locality $\le q-1+ (a_v -1)$.
\end{teo}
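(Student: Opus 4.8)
The plan is to produce, for every coordinate $t_0$, a recovery set of cardinality at most $q-1+(a_v-1)$. Fix $t_0$ and an arbitrary codeword $\boldsymbol c=\mathrm{ev}_J(h)$; by Theorem \ref{ddimension} we may write $h=\sum_{l=1}^{q-1}h_{\boldsymbol a_l}$, where each $h_{\boldsymbol a_l}$ is an $\mathbb{F}_q$-linear combination of the polynomials $\mathcal T_{\boldsymbol a_l}(\xi_{\boldsymbol a_l}^{k}X^{\boldsymbol a_l})$. I will recover $h(P_{t_0})=\sum_{l}h_{\boldsymbol a_l}(P_{t_0})$ by isolating the distinguished summand $h_{\boldsymbol a_v}$: first recover $h_{\boldsymbol a_v}(P_{t_0})$, then recover $g(P_{t_0})$ for $g:=\sum_{l\neq v}h_{\boldsymbol a_l}$, and add. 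Throughout I take $a_v\ge 2$; the degenerate case $a_v=1$, where the asserted bound is $q-1$, needs a minor separate argument.

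For $g$ the situation is exactly that of Theorem \ref{el21}: $\mathrm{ev}_J(g)$ is a codeword of the code attached to the closed set $\bigcup_{l\neq v}\mathfrak I_{\boldsymbol a_l}$, built from the $q-2$ cyclotomic sets whose $\sigma_L$-values $a_l$ ($l\neq v$) are still pairwise distinct modulo $q-1$. Hence, as in the proof of that theorem, once $g$ is known at the $q-2$ orbit points $\eta^{n}\cdot_L P_{t_0}$, $1\le n\le q-2$, a Vandermonde system returns $g(P_{t_0})$. These values of $g$ will be available because, by Lemma \ref{lema17}, $h_{\boldsymbol a_v}(\eta^{n}\cdot_L P_{t_0})=\eta^{na_v}h_{\boldsymbol a_v}(P_{t_0})$, so $g(\eta^{n}\cdot_L P_{t_0})=h(\eta^{n}\cdot_L P_{t_0})-\eta^{na_v}h_{\boldsymbol a_v}(P_{t_0})$ is determined by the coordinates of $\boldsymbol c$ at those points together with the already-recovered $h_{\boldsymbol a_v}(P_{t_0})$.

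It remains to recover $h_{\boldsymbol a_v}(P_{t_0})$, which is the heart of the argument. By condition (1), $a_v\mid N_1-1\mid Q-1$, so the group $H\subset\mathbb{F}_Q^{*}$ of $a_v$-th roots of unity has order $a_v$, and (as in Lemma \ref{lema16bis}) $\beta\cdot_L P\in Z_J$ whenever $\beta\in H$ and $P\in Z_J$. Since $L=\{1\}$, the $X_1$-exponent of $X^{\boldsymbol a_v}$ is $a_v$, so $X^{\boldsymbol a_v}(\beta\cdot_L P)=\beta^{a_v}X^{\boldsymbol a_v}(P)=X^{\boldsymbol a_v}(P)$, whence $h_{\boldsymbol a_v}(\beta\cdot_L P)=h_{\boldsymbol a_v}(P)$ for every $\beta\in H$. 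For $l\neq v$, expanding $h_{\boldsymbol a_l}$ as above gives $h_{\boldsymbol a_l}(\beta\cdot_L P)=\sum_{t=0}^{i_{\boldsymbol a_l}-1}\beta^{a_lq^{t}}\theta_{l,t}$ with $\theta_{l,t}$ independent of $\beta$; because $\gcd(a_l,a_v)=1$ by condition (2) and $\gcd(q,a_v)=1$, every exponent $a_lq^{t}$ is a unit modulo $a_v$, so $\beta\mapsto\beta^{a_lq^{t}}$ permutes $H$ and, as $a_v\ge 2$, $\sum_{\beta\in H}\beta^{a_lq^{t}}=0$. Summing over $l$ and over $\beta\in H$ then gives $\sum_{\beta\in H}h(\beta\cdot_L P)=a_v\,h_{\boldsymbol a_v}(P)$, and since $\gcd(a_v,p)=1$ we may divide by $a_v$. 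To stay away from $P_{t_0}$ we apply this at $Q:=\eta\cdot_L P_{t_0}$: condition (3), $\gcd(a_v,q-1)=1$, forces $\eta^{-1}\notin H$, hence $P_{t_0}\notin\{\beta\cdot_L Q:\beta\in H\}$; from the $a_v$ coordinates of $\boldsymbol c$ at these points we obtain $h_{\boldsymbol a_v}(Q)$, and then $h_{\boldsymbol a_v}(P_{t_0})=\eta^{-a_v}h_{\boldsymbol a_v}(Q)$ by Lemma \ref{lema17}.

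Collecting the points used, the recovery set consists of the coordinates of the points in $\{\eta^{n}\cdot_L P_{t_0}:1\le n\le q-2\}\cup\{\beta\cdot_L Q:\beta\in H\}$; its cardinality is at most $(q-2)+a_v=q-1+(a_v-1)$ (the point $Q=\eta\cdot_L P_{t_0}$ lying in both sets), it does not contain $t_0$, and all the operations performed are $\mathbb{F}_q$-linear, so it is genuinely a recovery set for $t_0$. As $t_0$ was arbitrary, $\mathcal{C}_\Delta^{J}$ is an LRC code with locality $\le q-1+(a_v-1)$. I expect the main obstacle to be the identity $\sum_{\beta\in H}h_{\boldsymbol a_l}(\beta\cdot_L P)=0$ for $l\neq v$: this is exactly where hypotheses (1) and (2) enter, it is what makes averaging over the $H$-orbit single out $h_{\boldsymbol a_v}$, and one must verify that the Frobenius twists $q^{t}$ inside the trace $\mathcal T_{\boldsymbol a_l}$ do not spoil the ``$\beta\mapsto\beta^{a_lq^{t}}$ permutes $H$'' step; a secondary subtlety, settled by condition (3), is that forbidding $P_{t_0}$ from the recovery set forces the $H$-averaging to be performed at the translate $Q$ and transported back through the $\eta$-covariance of Lemma \ref{lema17}.
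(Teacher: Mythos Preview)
Your argument is essentially the paper's own proof: both isolate $h_{\boldsymbol a_v}(P_{t_0})$ by averaging $h$ over the orbit of the $a_v$-th roots of unity applied at $\eta\cdot_L P_{t_0}$, then feed the residual $g=h-h_{\boldsymbol a_v}$ (built from $q-2$ cyclotomic sets) into the Vandermonde system of Theorem~\ref{el21} over the $\eta$-orbit. The recovery set you describe coincides with the paper's set $\overline{R}\setminus\{t_0\}$.

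One point where your write-up is actually more precise than the paper's: in verifying $\sum_{\beta\in H}h_{\boldsymbol a_l}(\beta\cdot_L P)=0$ for $l\neq v$, you correctly track the Frobenius twists coming from $\mathcal T_{\boldsymbol a_l}$ and check that each exponent $a_lq^{t}$ is a unit modulo $a_v$ (using $\gcd(q,a_v)=1$, which follows from $a_v\mid N_1-1\mid Q-1$). The paper's displayed expression collapses these twists into a single power $\omega^{a_l}$, which is not literally correct when $\omega\notin\mathbb F_q$, though the vanishing conclusion is. So your proof is the same route, handled with a bit more care at exactly the step you flagged as the main obstacle.
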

\begin{proof}
As in the proof of Theorem \ref{el21}, let $\boldsymbol{c} =\mathrm{ev}_J(h)$ be a codeword of $\mathcal{C}_\Delta^{J}$. Fix a coordinate $t_0$ and consider the set of points
\[
\overline{R} := \{  \eta^n \cdot_L P_{t_0} \, : \, 0  \leq n \leq q -1 \} \cup
 \{ \omega^k \eta \cdot_L P_{t_0} \, : \, 1  \leq k \leq a_v -1 \},
\]
where $\omega$ is a primitive $a_v$-root of unity.
Since $a_v$ divides $N_1-1$, then it holds that $\omega^k \eta \cdot_L P_{t_0} \in Z_J$ for all $1  \leq k \leq a_v -1$.
We shall show that  $\mbox{coord}(\overline{R}) \setminus \{ {t_0} \}$ is a recovery set for the coordinate $t_0$.
For simplicity  we can assume $v=1$. As in Theorem \ref{el21}, we can write $h$  as
\[
 h = h_{\boldsymbol{a}_1} + h_{\boldsymbol{a}_2} + \cdots + h_{\boldsymbol{a}_{q-1}},
 \]
$h_{\boldsymbol{a}_l}$ being a linear combination  with coefficients in $\mathbb{F}_{q}$ of polynomials of the form $\mathcal{T}_{\boldsymbol{a}_l} (\xi_{\boldsymbol{a}_l}^{k} X^{\boldsymbol{a}_l})$, $0 \leq k \leq i_{\boldsymbol{a}_l} -1$. So we have
\begin{equation}
\label{sumar}
\begin{aligned}
h\left(\eta \cdot_L P_{t_0}\right) &=
\eta^{a_1} h_{\boldsymbol{a}_1} \left(P_{t_0}\right)+\eta^{a_2} h_{\boldsymbol{a}_2} \left(P_{t_0}\right)+ \cdots + \eta^{a_{q-1}} h_{\boldsymbol{a}_{q-1}}\left(P_{t_0}\right), \\
h\left(\omega \eta \cdot_L P_{t_0}\right) &=
\eta^{a_1} h_{\boldsymbol{a}_1} \left(P_{t_0}\right)+ \eta^{a_2} h_{\boldsymbol{a}_2} \left(\omega \cdot_L P_{t_0}\right)+ \cdots + \eta^{a_{q-1}} h_{\boldsymbol{a}_{q-1}}\left(\omega \cdot_L P_{t_0}\right),\\
 & \vdots\\
h\left(\omega^{a_1 -1} \eta \cdot_L P_{t_0}\right) &=
\eta^{a_1} h_{\boldsymbol{a}_1} \left(P_{t_0}\right)+\eta^{a_2} h_{\boldsymbol{a}_2} \left(\omega^{a_1 -1} \cdot_L P_{t_0}\right)+  \cdots + \eta^{a_{q-1}} h_{\boldsymbol{a}_{q-1}}\left(\omega^{a_1 -1}\cdot_L P_{t_0}\right).
\end{aligned}
\end{equation}
The facts that $a_1$ divides $N_1-1$ and $\gcd(a_1,q-1)=1$ imply that none of the points $\omega^k \eta \cdot_L P_{t_0}$, $0 \leq k \leq a_1 -1$, coincide neither with $P_{t_0}$ nor among them. Adding  the equalities in (\ref{sumar}) we get a known value on the left hand. On the right hand we get
\begin{multline}
\label{ceros}
(a_1 -1) \left(\eta^{a_1} h_{\boldsymbol{a}_1} \left(P_{t_0}\right)\right) +
\eta^{a_2} h_{\boldsymbol{a}_2} \left(P_{t_0}\right) \left( 1 + \omega^{a_2} + (\omega^{a_2})^2 + \cdots + (\omega^{a_2})^{a_1 -1} \right) + \cdots  \\ +  \eta^{a_{q-1}} h_{\boldsymbol{a}_{q-1}} \left(P_{t_0}\right) \left( 1 + \omega^{a_{q-1}} + (\omega^{a_{q-1}})^2 + \cdots + (\omega^{a_{q-1}})^{a_1 -1}  \right).
\end{multline}
Since  $\gcd(a_1,a_l) =1$  for $2 \leq l \leq q-1$, it holds that
$$
1 + \omega^{a_l} + (\omega^{a_l})^2 +\cdots + (\omega^{a_l})^{a_1 -1}=0
$$
as this expression is the sum of all $a_1$-roots of unity. Then (\ref{ceros}) becomes
$(a_1 -1) \left(\eta^{a_1} h_{\boldsymbol{a}_1} \left(P_{t_0}\right)\right)$, from which we  deduce
$h_{\boldsymbol{a}_1} \left(P_{t_0}\right)$.
The polynomial $h- h_{\boldsymbol{a}_1} = h_{\boldsymbol{a}_2} + \cdots + h_{\boldsymbol{a}_{q-1}}$ is related to  $q-2$ minimal cyclotomic  sets, hence we can apply now the procedure developed in  the proof of Theorem \ref{el21} to  compute $(h - h_{\boldsymbol{a}_1}) \left(P_{t_0}\right)$. Finally   $h\left(P_{t_0}\right)=(h - h_{\boldsymbol{a}_1})\left(P_{t_0}\right) +h_{\boldsymbol{a}_1} \left(P_{t_0}\right)$.
\end{proof}

\subsection{The case of $\{1,2, \ldots,m\}$-affine variety codes}
\label{subsect3}

In this subsection we study the codes $\mathcal{C}^{J}_\Delta$ when $J$ equals the whole set of indices,  $J=\{1,2, \ldots, m\}$.  Let $L$ be a non-empty subset of $J$.
As in the previous subsection, we take the closed set $\Delta$  as a union of minimal cyclotomic sets,  $\Delta = \cup_{i=1}^z \mathfrak{I}_{\boldsymbol{a}_i}$, but now we  shall add the condition that the set $\{ \sigma_L({\boldsymbol{a}_1}),\sigma_L({\boldsymbol{a}_2}),\dots, \sigma_L({\boldsymbol{a}_z}) \}$ contains exactly $r$ consecutive integers. With these ingredients we construct the $J$-affine variety code $E^{J}_\Delta$
over $\mathbb{F}_Q$ and its subfield-subcode $\mathcal{C}^{J}_\Delta$ over $\mathbb{F}_q$.

Let us first deal with the case $m=1$ and $r=z$.  So let  $J=L=\{1\}$.
The set $\mathcal{A}$ of representatives of all minimal cyclotomic sets in $\mathcal{H}_J$
can be seen now as a subset of $\mathbb{Z}$.
Let  $\boldsymbol{a}_1,\boldsymbol{a}_2 , \dots , \boldsymbol{a}_r, \boldsymbol{a}_{r+1}$ be the $r+1$ smallest (with respect to the natural order in $\mathbb{Z}$) representatives in $\mathcal{A}$  and let $\Delta= \cup_{l=1}^{r} \mathfrak{I}_{\boldsymbol{a}_l}$.
The codes  $E_\Delta^{J}$ we obtain in this case were studied in \cite{QINP2}, where the following result was proved by using the BCH bound.

\begin{pro}
\label{nuevapro}
{\rm (\cite[Theorem 3.7]{QINP2})}
Let $m=1$ and  let $\Delta= \cup_{l=1}^{r} \mathfrak{I}_{\boldsymbol{a}_l}$, where
$\boldsymbol{a}_1,\boldsymbol{a}_2 , \dots , \boldsymbol{a}_r$, $\boldsymbol{a}_{r+1}$ are the $r+1$ smallest elements of $\mathcal{A}$.
Then the minimum distance of the dual code  $(E_\Delta^{J})^{\perp}$ satisfies $d( (E_\Delta^{J})^{\perp})\ge \boldsymbol{a}_{r+1}+1$.
\end{pro}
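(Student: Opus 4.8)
The plan is to reduce everything to the BCH bound by first showing that, viewed inside $\mathbb{Z}/(N_1-1)\mathbb{Z}$, the set $\Delta$ contains the run of $\boldsymbol{a}_{r+1}$ consecutive integers $\{0,1,\dots,\boldsymbol{a}_{r+1}-1\}$, and then running the usual Vandermonde argument on the dual code. Observe that this proposition concerns only the $\mathbb{F}_Q$-code $E_\Delta^J$, so no subfield-subcode machinery is needed.

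First I would establish the combinatorial claim $\{0,1,\dots,\boldsymbol{a}_{r+1}-1\}\subseteq\Delta$. Since $m=1$, the set $\mathcal{H}_J$ is identified with $\mathbb{Z}/(N_1-1)\mathbb{Z}$ and, by its construction, the representative of a minimal cyclotomic set is its least element. Fix an integer $b$ with $0\le b\le \boldsymbol{a}_{r+1}-1$ and let $\boldsymbol{b}\in\mathcal{A}$ be the representative of the minimal cyclotomic set containing $b$; then $\boldsymbol{b}=\min\mathfrak{I}_{\boldsymbol{b}}\le b<\boldsymbol{a}_{r+1}$. As $\boldsymbol{a}_1<\boldsymbol{a}_2<\cdots<\boldsymbol{a}_{r+1}$ are, by hypothesis, the $r+1$ smallest elements of $\mathcal{A}$, any element of $\mathcal{A}$ that is $<\boldsymbol{a}_{r+1}$ is one of $\boldsymbol{a}_1,\dots,\boldsymbol{a}_r$; hence $\boldsymbol{b}=\boldsymbol{a}_l$ for some $l\le r$ and $b\in\mathfrak{I}_{\boldsymbol{a}_l}\subseteq\Delta$. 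This produces the desired run of $\boldsymbol{a}_{r+1}$ consecutive exponents in $\Delta$.

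Next I would pass to the dual. Let $\beta_1,\dots,\beta_{n_J}$, with $n_J=N_1-1$, be the points of $Z_J$, i.e. the $(N_1-1)$-th roots of unity in $\mathbb{F}_Q$; they are pairwise distinct because $N_1-1$ divides $Q-1$, and $\mathrm{ev}_J(X^a)=(\beta_1^a,\dots,\beta_{n_J}^a)$. A word $\boldsymbol{w}=(w_1,\dots,w_{n_J})$ lies in $(E_\Delta^J)^\perp$ if and only if $\sum_{i=1}^{n_J}w_i\beta_i^a=0$ for all $a\in\Delta$, in particular for $a=0,1,\dots,\boldsymbol{a}_{r+1}-1$ by the previous paragraph. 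Assume $\boldsymbol{w}\neq 0$ has weight $s\le\boldsymbol{a}_{r+1}$ and support $\{i_1,\dots,i_s\}$. Keeping only the equations for $a=0,1,\dots,s-1$ yields a homogeneous system $V\boldsymbol{w}'=0$ with $\boldsymbol{w}'=(w_{i_1},\dots,w_{i_s})$ and $V=(\beta_{i_k}^{a})_{0\le a\le s-1,\ 1\le k\le s}$, a Vandermonde matrix in the distinct nodes $\beta_{i_1},\dots,\beta_{i_s}$, hence invertible; so $\boldsymbol{w}'=0$, contradicting $\boldsymbol{w}\neq 0$. Therefore every nonzero word of $(E_\Delta^J)^\perp$ has weight at least $\boldsymbol{a}_{r+1}+1$, which is the claim. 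Equivalently, once the run $\{0,\dots,\boldsymbol{a}_{r+1}-1\}\subseteq\Delta$ is known, $(E_\Delta^J)^\perp$ is a cyclic code of length $N_1-1$ whose defining set contains $\boldsymbol{a}_{r+1}$ consecutive integers, and the statement is exactly the BCH bound for this code.

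The only genuinely delicate step is the combinatorial claim, where one must exploit that the $\boldsymbol{a}_l$ are precisely the $r+1$ smallest representatives and that representatives are minima of their cyclotomic sets; everything else is the standard Vandermonde/BCH mechanism. One should also note that the statement implicitly assumes $\boldsymbol{a}_{r+1}$ exists, i.e. $\Delta\subsetneq\mathcal{H}_J$, so that $(E_\Delta^J)^\perp\neq\{0\}$ and its minimum distance is well defined.
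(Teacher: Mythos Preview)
Your proof is correct and follows essentially the same route the paper indicates: the paper does not give its own proof but cites \cite{QINP2} and states that the result is obtained ``by using the BCH bound.'' Your argument is exactly this---you first verify the combinatorial fact that $\{0,1,\dots,\boldsymbol{a}_{r+1}-1\}\subseteq\Delta$ (using that representatives are minima of their cyclotomic sets), and then you run the standard Vandermonde computation that underlies the BCH bound for cyclic codes of length $N_1-1$.
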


Let us recall that there exists a close relation between the dual of any linear code $\mathcal{D}$ of length $n$, defined over $\mathbb{F}_Q$ and the subfield-subcode $\mathcal{D}\cap \mathbb{F}_q^n$. This relation is given by the Delsarte Theorem,  as follows
$$
(\mathcal{D} \cap \mathbb{F}_q^n)^{\perp} = \mbox{Tr}(\mathcal{D}^{\perp})
$$
where $\mbox{Tr}$ is the trace map of the extension $\mathbb{F}_Q/\mathbb{F}_q$, see \cite{delsarte}.
In our case $m=1$, this theorem implies
\begin{equation}
\label{ilatina}
\left( \mathcal{C}^{J}_\Delta\right)^\perp = (E^{J}_\Delta)^\perp \cap  \mathbb{F}_q^{n_J}
\end{equation}
see \cite[Proposition 11]{IEEE} for a complete proof of this equality.
If $r\le q-2$, then the $r$ smallest elements of $\mathcal{A}$ are $\boldsymbol{a}_l=l-1$, $l=1,2,\dots,r$, and hence from Proposition \ref{nuevapro} and Equality (\ref{ilatina})  we have $d((\mathcal{C}^{J}_\Delta)^{\perp})\ge r+1$.

\begin{pro}
\label{el22a}
Let $m=1$ and  $\Delta= \cup_{l=1}^{r} \mathfrak{I}_{\boldsymbol{a}_l}$, where $r\le q-2$ and
$\boldsymbol{a}_1,\boldsymbol{a}_2, \dots ,\boldsymbol{a}_r$  are  the $r$ smallest elements of $\mathcal{A}$. Then $\mathcal{C}^{J}_\Delta$ is a sharp LRC code of locality $r_1=r$.
\end{pro}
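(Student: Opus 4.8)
The statement has two parts: that $\mathcal{C}^{J}_\Delta$ is an LRC code of (classical) locality $r_1 = r$, and that it is \emph{sharp}, i.e. $r_1 = d((\mathcal{C}^{J}_\Delta)^\perp)-1$. The second part is immediate from what precedes the proposition: in the paragraph just before the statement it is established, via Proposition \ref{nuevapro} and the Delsarte equality (\ref{ilatina}), that $d((\mathcal{C}^{J}_\Delta)^\perp) \ge r+1$; so once we show $r_1 \le r$, Proposition \ref{d1dual} gives $r_1 \ge d((\mathcal{C}^{J}_\Delta)^\perp)-1 \ge r$, forcing $r_1 = r$ and $d((\mathcal{C}^{J}_\Delta)^\perp) = r+1$. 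Hence the whole content of the proof is to verify $r_1 \le r$, which is exactly the conclusion of Theorem \ref{el21} applied to this situation.

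First I would check that the hypotheses of Theorem \ref{el21} are met here. We have $L = J = \{1\}$, $m = 1$, and $\Delta = \cup_{l=1}^{r}\mathfrak{I}_{\boldsymbol{a}_l}$ with $\{\boldsymbol{a}_1,\dots,\boldsymbol{a}_r\}$ the $r$ smallest representatives in $\mathcal{A} \subset \mathbb{Z}$. Since $r \le q-2$, the $r$ smallest elements of $\mathcal{A}$ are $\boldsymbol{a}_l = l-1$ for $l = 1,\dots,r$ (this is the same observation made in the paragraph preceding the proposition). In the $m=1$, $L=\{1\}$ case we have $\sigma_L(\boldsymbol{a}_l) = a_l = l-1$, so the integers $\sigma_L(\boldsymbol{a}_1),\dots,\sigma_L(\boldsymbol{a}_r)$ are $0,1,\dots,r-1$, which are certainly pairwise distinct modulo $q-1$ because $r \le q-2 < q-1$. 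Also $r \le q-2$ is the cardinality bound required by Theorem \ref{el21}. Therefore Theorem \ref{el21} applies and yields that $\mathcal{C}^{J}_\Delta$ is an LRC code with locality $\le r$, i.e. $r_1 \le r$.

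Combining the two directions: $r_1 \le r$ from Theorem \ref{el21}, and $r_1 \ge d((\mathcal{C}^{J}_\Delta)^\perp) - 1 \ge (r+1) - 1 = r$ from Proposition \ref{d1dual} together with the bound $d((\mathcal{C}^{J}_\Delta)^\perp) \ge r+1$ noted above. Hence $r_1 = r$ and equality holds in Proposition \ref{d1dual}, so $\mathcal{C}^{J}_\Delta$ is sharp, as claimed.

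**Main obstacle.** There is no real obstacle here — the proposition is essentially a corollary assembling Theorem \ref{el21}, Proposition \ref{d1dual}, and the dual-distance estimate that was spelled out in the text right before the statement. The only point requiring a line of care is the identification of the $r$ smallest cyclotomic representatives with $0,1,\dots,r-1$ under the assumption $r \le q-2$ (so that each singleton $\{l-1\}$ is its own minimal cyclotomic set, since $q\cdot(l-1) \equiv l-1$ is false in general but the cyclotomic orbit of $l-1$ modulo $T_1$ need not be a singleton — one should instead simply invoke that these are by definition the smallest \emph{representatives} in $\mathcal{A}$ and that, as already recorded in the preceding paragraph, they equal $l-1$ when $r\le q-2$), and the verification that the resulting $\sigma_L$-values are distinct mod $q-1$. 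Both are routine.
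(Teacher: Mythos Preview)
Your proof is correct and follows exactly the same approach as the paper: apply Theorem \ref{el21} to obtain locality $\le r$, then combine the bound $d((\mathcal{C}^{J}_\Delta)^\perp)\ge r+1$ (derived just before the proposition from Proposition \ref{nuevapro} and the Delsarte equality) with Proposition \ref{d1dual} to force $r_1=r$ and sharpness. Your version simply spells out in more detail the verification of the hypotheses of Theorem \ref{el21}, which the paper takes for granted.
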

\begin{proof}
According to Theorem \ref{el21}, $\mathcal{C}^{J}_\Delta$ is an  LRC code of locality $\le r$. Since $d((\mathcal{C}^{J}_\Delta)^{\perp})\ge r+1$, Proposition  \ref{d1dual} implies the result.
\end{proof}

Le us study now the general case $m\ge 1$. So let $J=\{ 1,2,\dots,m\}$,
let $L$ be a non-empty subset of $J$,
 and  $\Delta= \cup_{l=1}^{z} \mathfrak{I}_{\boldsymbol{a}_l}$. We are going to construct codes $\mathcal{C}^{J}_\Delta$ with locality $(r,q-r)$ for some $r\le z$, whose $(r,q-r)$ recovery sets satify the conditions  (RD1) and (RD2) stated in Section \ref{secuno} with equality.
Fix a coordinate $t_0$, $1\le t_0\le n_J$, and let us consider the orbit of $P_{t_0}$, defined in Equation (\ref{pi}),
$$
R_{t_0} = \{  \eta^n \cdot_L P_{t_0} \, : \,  0  \leq n \leq q -2 \} = \{ P_{t_0}, P^L_{1,t_0},\dots, P^L_{q-2,t_0}\} \subseteq Z_J.
$$

\begin{lem}
\label{dimens}
Let $J=\{ 1,2,\dots,m\}$ and let  $\Delta= \cup_{l=1}^{z} \mathfrak{I}_{\boldsymbol{a}_l}$
be a closed set. Let $t_0$ be a
\linebreak
coordinate, $1\le t_0\le n_J$, and let $R_{t_0}$ be the orbit of $P_{t_0}\in Z_J$.
If the set $\{ \sigma_L(\boldsymbol{a}_1),$
\linebreak
$\sigma_L(\boldsymbol{a}_2),\dots,$ $\sigma_L(\boldsymbol{a}_z) \}$
contains exactly $r\le q-2$ distinct elements modulo $q-1$,
then the punctured code $\mathcal{C}^{J}_\Delta[\mbox{\rm coord}(R_{t_0})]$ has length $q-1$ and dimension $r$.
\end{lem}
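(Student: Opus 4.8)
The plan is to analyze the punctured code $\mathcal{C}^{J}_\Delta[\mathrm{coord}(R_{t_0})]$ by restricting the trace-basis of $\mathcal{C}^{J}_\Delta$ (Theorem \ref{ddimension}) to the $q-1$ points of the orbit $R_{t_0}$, and then computing the rank of the resulting matrix via a Vandermonde-type argument. The length claim is immediate: by definition $R_{t_0}=\{P_{t_0},P^L_{1,t_0},\dots,P^L_{q-2,t_0}\}$, and since $a_v$ divides $N_1-1$ and $\gcd(a_v,q-1)=1$ in the relevant setting — or more simply, since $\eta^n\cdot_L P_{t_0}=\eta^{n'}\cdot_L P_{t_0}$ forces $n\equiv n'\pmod{q-1}$ because some coordinate of $P_{t_0}$ in a position of $L$ is a nonzero element of $\mathbb{F}_Q$ and $\eta$ has order $q-1$ — the $q-1$ points are pairwise distinct, so $\mathrm{coord}(R_{t_0})$ has exactly $q-1$ elements. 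Hence $\mathcal{C}^{J}_\Delta[\mathrm{coord}(R_{t_0})]$ is a code of length $q-1$.

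For the dimension, I would proceed as follows. Group the representatives $\boldsymbol{a}_1,\dots,\boldsymbol{a}_z$ according to the value of $\sigma_L(\boldsymbol{a}_l)$ modulo $q-1$; by hypothesis there are exactly $r$ distinct residues, say $s_1,\dots,s_r$. A codeword of $\mathcal{C}^{J}_\Delta$ is $\mathrm{ev}_J(h)$ with $h=\sum_{l=1}^z h_{\boldsymbol{a}_l}$, and by Lemma \ref{lema17} the value at $P^L_{n,t_0}$ is
\[
h\left(P^L_{n,t_0}\right)=\sum_{l=1}^{z}\eta^{\,n\,\sigma_L(\boldsymbol{a}_l)}\,h_{\boldsymbol{a}_l}(P_{t_0})
=\sum_{j=1}^{r}(\eta^{s_j})^{\,n}\left(\sum_{l:\,\sigma_L(\boldsymbol{a}_l)\equiv s_j}h_{\boldsymbol{a}_l}(P_{t_0})\right),
\]
so the projected codeword, as a function of $n\in\{0,1,\dots,q-2\}$, lies in the span of the $r$ functions $n\mapsto(\eta^{s_j})^n$. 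Since $\eta$ is primitive in $\mathbb{F}_q$ and the $s_j$ are distinct mod $q-1$, the elements $\eta^{s_1},\dots,\eta^{s_r}$ are distinct, so these $r$ functions on $\{0,\dots,q-2\}$ are linearly independent (their coordinate vectors form an $r\times(q-1)$ Vandermonde matrix of rank $r$). This shows $\dim \mathcal{C}^{J}_\Delta[\mathrm{coord}(R_{t_0})]\le r$, and it will follow that equality holds once I exhibit codewords realizing each of the $r$ "frequencies" with a nonzero coefficient $\sum_{l}h_{\boldsymbol{a}_l}(P_{t_0})$.

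The main obstacle is precisely surjectivity onto these $r$ frequencies, i.e.\ showing that for each residue $s_j$ there is a codeword $h$ of $\mathcal{C}^{J}_\Delta$ with $\sum_{l:\,\sigma_L(\boldsymbol{a}_l)\equiv s_j}h_{\boldsymbol{a}_l}(P_{t_0})\neq 0$ while the other groups contribute zero at $P_{t_0}$. It suffices to treat a single minimal cyclotomic set: pick one $\boldsymbol{a}$ with $\sigma_L(\boldsymbol{a})\equiv s_j$ and take $h=h_{\boldsymbol{a}}=\mathcal{T}_{\boldsymbol{a}}(\xi_{\boldsymbol{a}}^{k}X^{\boldsymbol{a}})$ for a suitable $k$; then $h_{\boldsymbol{a}}(P_{t_0})=\mathcal{T}_{\boldsymbol{a}}(\xi_{\boldsymbol{a}}^{k}X^{\boldsymbol{a}})(P_{t_0})$ is, by the computation in the proof of Lemma \ref{lema17}, the $\mathbb{F}_{q^{i_{\boldsymbol{a}}}}/\mathbb{F}_q$-trace of $\xi_{\boldsymbol{a}}^{k}\,\beta$ where $\beta=\prod_{l}(\gamma_l^{k_l})^{a_l}\in\mathbb{F}_Q^{*}$ is a fixed nonzero element. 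Because $\xi_{\boldsymbol{a}}$ is primitive, as $k$ ranges over $0,\dots,i_{\boldsymbol{a}}-1$ the products $\xi_{\boldsymbol{a}}^{k}\beta$ cannot all have zero trace (the trace form is nondegenerate and these elements are linearly independent over $\mathbb{F}_q$), so some $k$ gives $h_{\boldsymbol{a}}(P_{t_0})\neq 0$. This produces the required $r$ independent projected codewords, one per frequency, giving $\dim \mathcal{C}^{J}_\Delta[\mathrm{coord}(R_{t_0})]=r$ and completing the proof. I would also note that the Vandermonde argument here is the same one already used in the proof of Theorem \ref{el21}, so much of the bookkeeping can be cited rather than repeated.
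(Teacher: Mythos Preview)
Your argument is correct and follows essentially the same route as the paper: both restrict the trace basis of Theorem~\ref{ddimension} to the orbit, use Lemma~\ref{lema17} to factor each generator as a scalar times $(1,\eta^{\sigma_L(\boldsymbol a_l)},\dots,\eta^{(q-2)\sigma_L(\boldsymbol a_l)})$, invoke a Vandermonde argument to get rank $r$, and then show that for each $\boldsymbol a$ some $k$ makes $\mathcal T_{\boldsymbol a}(\xi_{\boldsymbol a}^k X^{\boldsymbol a})(P_{t_0})\neq 0$. Two small clean-ups: drop the stray reference to the hypotheses ``$a_v\mid N_1-1$ and $\gcd(a_v,q-1)=1$'' in your length paragraph (those belong to Theorem~\ref{el25}, not here), and in your nonvanishing step justify that $\beta=X^{\boldsymbol a}(P_{t_0})$ actually lies in $\mathbb F_{q^{i_{\boldsymbol a}}}$ (this follows from $q^{i_{\boldsymbol a}}\boldsymbol a=\boldsymbol a$ in $\mathcal H_J$ and $J=\{1,\dots,m\}$), so that your trace-nondegeneracy phrasing is literally valid; the paper sidesteps this by writing out the Vandermonde system in the conjugates $\xi_{\boldsymbol a},\xi_{\boldsymbol a}^q,\dots,\xi_{\boldsymbol a}^{q^{i_{\boldsymbol a}-1}}$ directly.
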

\begin{proof}
The statement about the length is clear, since $R_{t_0}\subseteq Z_J$. Let us compute the dimension of $\mathcal{C}^{J}_\Delta[\mbox{coord}(R_{t_0})]$. According to Theorem \ref{ddimension}, this code is generated by the set of vectors
\begin{equation}
\label{generators}
\boldsymbol{v}_{l k}= \left(\mathcal{T}_{\boldsymbol{a}_l} (\xi_{\boldsymbol{a}_l}^{k} X^{\boldsymbol{a}_l}) (P_{t_0} ),
\mathcal{T}_{\boldsymbol{a}_l} (\xi_{\boldsymbol{a}_l}^{k} X^{\boldsymbol{a}_l}) (P^L_{1,t_0} ), \dots,
\mathcal{T}_{\boldsymbol{a}_l} (\xi_{\boldsymbol{a}_l}^{k} X^{\boldsymbol{a}_l}) (P^L_{q-2,t_0} )\right)
\end{equation}
$1\le l\le z;\,  0 \leq k \leq i_{\boldsymbol{a}_l} -1$.
From Lemma  \ref{lema17} we have
$$
\boldsymbol{v}_{l k}= \left(\mathcal{T}_{\boldsymbol{a}_l} (\xi_{\boldsymbol{a}_l}^{k} X^{\boldsymbol{a}_l}) (P_{t_0})\right)
 \, (1, \eta^{\sigma_L (\boldsymbol{a}_l)}, \ldots, \eta^{(q-2) \sigma_L (\boldsymbol{a}_l)}) =
 \left(\mathcal{T}_{\boldsymbol{a}_l} (\xi_{\boldsymbol{a}_l}^{k} X^{\boldsymbol{a}_l}) (P_{t_0})\right) \, \boldsymbol{w}_{l }
$$
for $1\le l\le z$ and   $0 \leq k \leq i_{\boldsymbol{a}_l} -1$, where $\boldsymbol{w}_{l }= (1, \eta^{\sigma_L (\boldsymbol{a}_l)}, \ldots, \eta^{(q-2) \sigma_L (\boldsymbol{a}_l)})$.
Since $\eta$ is a primitive element of $\mathbb{F}_q$, then the set of vectors $\{ \boldsymbol{w}_{1},\boldsymbol{w}_{2},\dots,\boldsymbol{w}_{z} \}$ has rank exactly  $r$.
 Thus, to prove our claim on the dimension of $\mathcal{C}^{J}_\Delta[\mbox{coord}(R_{t_0})]$ it suffices to prove that for any $\boldsymbol{a}\in \{ \boldsymbol{a}_1,\boldsymbol{a}_2,\dots,\boldsymbol{a}_z  \}$, there exists $k$ such that $\mathcal{T}_{\boldsymbol{a}} \left(\xi_{\boldsymbol{a}}^{k} X^{\boldsymbol{a}}\right) (P_{t_0}) \neq 0$. Suppose, on the contrary, that there is   $\boldsymbol{a}$ such that $\mathcal{T}_{\boldsymbol{a}} \left(\xi_{\boldsymbol{a}}^{k} X^{\boldsymbol{a}}\right) (P_{t_0}) = 0$  for all $k$, $0 \leq k \leq i_{\boldsymbol{a}}-1$.
Let $i=i_{\boldsymbol{a}}$. A simple computation shows that we can write
\begin{equation*}
\begin{aligned}
0=\mathcal{T}_{\boldsymbol{a}} \left(X^{\boldsymbol{a}}\right) (P_{t_0}) & = X^{\boldsymbol{a}}(P_{t_0}) (1 + {b_1} + \cdots + {b_{i-1}}),\\
0=\mathcal{T}_{\boldsymbol{a}} \left(\xi_{\boldsymbol{a}} X^{\boldsymbol{a}}\right) (P_{t_0}) & = X^{\boldsymbol{a}}(P_{t_0}) (\xi_{\boldsymbol{a}} +\xi_{\boldsymbol{a}}^q {b_1} + \cdots + \xi_{\boldsymbol{a}}^{(i-1)q} {b_{i-1}}),\\
& \vdots  \\
0=\mathcal{T}_{\boldsymbol{a}} \left(\xi_{\boldsymbol{a}}^{i-1} X^{\boldsymbol{a}}\right) (P_{t_0}) & = X^{\boldsymbol{a}}(P_{t_0}) (\xi_{\boldsymbol{a}}^{i-1} +(\xi_{\boldsymbol{a}}^{i-1})^q {b_1} + \cdots + (\xi_{\boldsymbol{a}}^{i-1})^{(i-1)q}{b_{i-1}})
\end{aligned}
\end{equation*}
for some elements $b_1,b_2,\dots,b_{i-1} \in \mathbb{F}_q$.  Note that $X^{\boldsymbol{a}}(P_{t_0})\neq 0$ since $J=\{1,2, \ldots, m\}$. Thus  the vector $ \left(1, {b_1}, b_2,\ldots,{b_{i-1}}\right) \neq \boldsymbol{0}$ is a solution of a homogeneous square linear system whose matrix is of Vandermonde type, what is not possible.
\end{proof}

Let us recall here the well-known fact that the dual code of a linear MDS code is again an MDS code.

\begin{pro}
\label{esmds}
Let $J=\{ 1,2,\dots,m\}$ and  $\Delta= \cup_{l=1}^{z} \mathfrak{I}_{\boldsymbol{a}_l}$ be a closed set.
If the set $\{ \sigma_L(\boldsymbol{a}_1), \sigma_L(\boldsymbol{a}_2),\dots,$ $\sigma_L(\boldsymbol{a}_z) \}$
contains exactly $r\le q-2$ distinct values and these values are consecutive integers, then for any coordinate $t_0$, the punctured code $\mathcal{C}^{J}_\Delta[\mbox{\rm coord}(R_{t_0})]$ is an MDS code with parameters $[q-1,r,q-r]$, where  $R_{t_0}$ is the orbit of $P_{t_0}\in Z_J$.
\end{pro}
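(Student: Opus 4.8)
The plan is to combine the dimension computation already available from Lemma \ref{dimens} with a BCH-type lower bound on the minimum distance of the dual code, exactly in the spirit of what was done for $m=1$ via Proposition \ref{nuevapro} and the Delsarte identity (\ref{ilatina}). First I would invoke Lemma \ref{dimens}: under the stated hypotheses the punctured code $\mathcal{C}^{J}_\Delta[\mbox{coord}(R_{t_0})]$ has length $q-1$ and dimension exactly $r$. So it only remains to show that its minimum distance equals $q-r$; by the Singleton bound $d\le q-1-r+1=q-r$, so the real content is the reverse inequality $d\ge q-r$.

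For the lower bound I would argue that, after reindexing the $q-1$ points of the orbit $R_{t_0}$ by the exponents $n=0,1,\dots,q-2$ and factoring out the nonzero scalars $\mathcal{T}_{\boldsymbol{a}_l}(\xi_{\boldsymbol{a}_l}^k X^{\boldsymbol{a}_l})(P_{t_0})$ as in the proof of Lemma \ref{dimens}, the punctured code is (up to a diagonal change of coordinates, which preserves weights) spanned by the vectors $\boldsymbol{w}_l=(1,\eta^{\sigma_L(\boldsymbol{a}_l)},\dots,\eta^{(q-2)\sigma_L(\boldsymbol{a}_l)})$. These are precisely evaluation vectors of the monomials $y^{\sigma_L(\boldsymbol{a}_l)}$ over the nonzero elements of $\mathbb{F}_q$, and since the values $\sigma_L(\boldsymbol{a}_l)$ run over $r$ consecutive integers, $\mathcal{C}^{J}_\Delta[\mbox{coord}(R_{t_0})]$ is equivalent to a (possibly shifted) Reed--Solomon / BCH code of length $q-1$ whose defining set of exponents is an interval of length $r$. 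Equivalently, its dual has a defining set that is an interval of $q-1-r$ consecutive exponents, so by the BCH bound $d((\mathcal{C}^{J}_\Delta[\mbox{coord}(R_{t_0})])^{\perp})\ge q-r$; applying the same reasoning to the code itself (an interval defining set of size $r$ forces $d\ge q-r$ directly, or one uses that an RS/BCH code with consecutive exponents is MDS) yields $d\ge q-r$. Hence the parameters are $[q-1,r,q-r]$ and the code is MDS. Finally, since the dual of an MDS code is MDS, the dual is an $[q-1,q-1-r,r+1]$ MDS code, a fact recorded just before the statement which will be used in the sequel.

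The main obstacle I anticipate is making the identification with a cyclic/RS code fully rigorous in the present multivariate $J$-affine setting: one must check carefully that the diagonal scaling by the scalars $\mathcal{T}_{\boldsymbol{a}_l}(\xi_{\boldsymbol{a}_l}^k X^{\boldsymbol{a}_l})(P_{t_0})$ is genuinely invertible (this is exactly the non-vanishing established in Lemma \ref{dimens}), that the map $n\mapsto \eta^n$ correctly identifies the $q-1$ orbit points with $\mathbb{F}_q^*$, and that "consecutive integers among the $\sigma_L(\boldsymbol{a}_l)$" translates to a genuinely consecutive run of exponents modulo $q-1$ so that the BCH bound applies. Once the code is pinned down to be monomially equivalent to an RS code of dimension $r$ on $\mathbb{F}_q^*$, the MDS property and the parameters $[q-1,r,q-r]$ are immediate, and no further computation is needed.
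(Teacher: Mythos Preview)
Your proposal is correct and follows essentially the same route as the paper: both reduce the punctured code to the span of the vectors $\boldsymbol{w}_l=(1,\eta^{\sigma_L(\boldsymbol{a}_l)},\dots,\eta^{(q-2)\sigma_L(\boldsymbol{a}_l)})$ and then use the Vandermonde/BCH property of consecutive exponents to conclude MDS (the paper phrases this as ``any $r$ columns have rank $r$'' and passes through the dual, which is exactly your BCH bound). One small clarification: the scalars $\mathcal{T}_{\boldsymbol{a}_l}(\xi_{\boldsymbol{a}_l}^{k}X^{\boldsymbol{a}_l})(P_{t_0})$ depend on $(l,k)$ and not on the orbit position $n$, so they scale \emph{rows} of the generator matrix rather than coordinates; no diagonal coordinate change is needed, and the nonvanishing from Lemma~\ref{dimens} simply guarantees that every $\boldsymbol{w}_l$ actually occurs in the span.
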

\begin{proof}
For simplicity suppose that  $\sigma_L(\boldsymbol{a}_1), \sigma_L(\boldsymbol{a}_2),\dots,\sigma_L(\boldsymbol{a}_r)$ are the consecutive integers mentioned in the statement. Since $r\le q-2$ all these integers are distinct modulo $q-1$, and hence
the statements about length and dimension follow from Lemma \ref{dimens}. Let us compute the minimum distance of   $\mathcal{C}^{J}_\Delta[\mbox{coord}(R_{t_0})]$. A generator matrix of this code is
\[
\left(
    \begin{array}{cccc}
    \mathcal{T}_{\boldsymbol{a}_1} \left( X^{\boldsymbol{a}_1}\right) (P_{t_0}) & \eta^{\sigma_L (\boldsymbol{a}_1)} \mathcal{T}_{\boldsymbol{a}_1} \left( X^{\boldsymbol{a}_1}\right) (P_{t_0}) & \cdots & \eta^{(q-2)\sigma_L( \boldsymbol{a}_1)} \mathcal{T}_{\boldsymbol{a}_1} \left( X^{\boldsymbol{a}_1}\right) (P_{t_0})  \\
      \vdots & \vdots &  & \vdots \\
     \mathcal{T}_{\boldsymbol{a}_r} \left( X^{\boldsymbol{a}_r}\right) (P_{t_0}) & \eta^{ \sigma_L (\boldsymbol{a}_r)} \mathcal{T}_{\boldsymbol{a}_r} \left( X^{\boldsymbol{a}_r}\right) (P_{t_0}) & \cdots & \eta^{(q-2)\sigma_L (\boldsymbol{a}_r)} \mathcal{T}_{\boldsymbol{a}_r} \left( X^{\boldsymbol{a}_r}\right) (P_{t_0})\\
    \end{array}
  \right).
\]
If $\mathcal{T}_{\boldsymbol{a}_l} \left( X^{\boldsymbol{a}_l}\right) (P_{t_0}) = 0$ we  can remove the corresponding row in this matrix, hence
we can assume $\mathcal{T}_{\boldsymbol{a}_l} \left( X^{\boldsymbol{a}_l}\right) (P_{t_0}) \neq 0$ for all $\boldsymbol{a}_l$.
To study  the independence of columns, it suffices to consider the matrix
\[
\boldsymbol{A}= \left(
    \begin{array}{cccc}
     1& \eta^{\sigma_L (\boldsymbol{a}_1)} & \cdots & \eta^{(q-2)\sigma_L (\boldsymbol{a}_1)} \\
      \vdots & \vdots &  & \vdots\\
     1& \eta^{\sigma_L (\boldsymbol{a}_r)} & \cdots & \eta^{\sigma_L ((q-2)\boldsymbol{a}_r)} \\
    \end{array}
  \right).
\]
Since $\sigma_L(\boldsymbol{a}_1), \sigma_L(\boldsymbol{a}_2),\dots,\sigma_L(\boldsymbol{a}_r)$ are  consecutive integers,  any submatrix of $\boldsymbol{A}$ obtained by taking $r$ columns of $\boldsymbol{A}$ has rank $r$, see \cite[Lemma 6.6.5]{VL}. Thus the minimum distance of  $\mathcal{C}^{J}_\Delta[\mbox{coord}(R_{t_0})]^\perp$  is $\ge r+1$. So it has parameters  $[q-1, q-1-r, r+1]$ and then it is an MDS code. Therefore $\mathcal{C}^{J}_\Delta[\mbox{coord}(R_{t_0})]$ is also MDS with parameters $[q-1,r,q-r]$.
\end{proof}

As a direct consequence of this proposition, we have the following theorem.

\begin{teo}
\label{teo29}
Let $J=\{ 1,2,\dots,m\}$ and  $\Delta= \cup_{l=1}^{z} \mathfrak{I}_{\boldsymbol{a}_l}$ be a closed set.
If the set $\{ \sigma_L(\boldsymbol{a}_1), \sigma_L(\boldsymbol{a}_2),\dots,$ $\sigma_L(\boldsymbol{a}_z) \}$
contains exactly $r\le q-2$ distinct values and these values are consecutive integers, then for any coordinate $t_0$, the set $\mbox{\rm coord}(R_{t_0})$ is a $(r,q-r)$ recovery set for $t_0$, where   $R_{t_0}$ is the orbit of $P_{t_0}\in Z_J$.  Consequently $\mathcal{C}^{J}_\Delta$
is an LRC code with locality $(r,q-r)$ and  $r_{q-r-t}\le q-1-t$ for   $t=1,2,\dots,q-r-1$.
\end{teo}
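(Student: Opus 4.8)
The plan is to deduce Theorem \ref{teo29} almost immediately from Proposition \ref{esmds}, since the hard work has already been done. First I would observe that Proposition \ref{esmds} tells us that the punctured code $\mathcal{C}^{J}_\Delta[\mbox{coord}(R_{t_0})]$ is an $[q-1,r,q-r]$ MDS code. Writing $\overline{R}=\mbox{coord}(R_{t_0})$, this means $\#\overline{R}=q-1=r+(q-r)-1$ and $d(\mathcal{C}^{J}_\Delta[\overline{R}])=q-r$, so conditions (RD1) and (RD2) of Section \ref{secuno} are satisfied with $\delta=q-r$ and indeed with equality in both. Since $t_0\in\overline{R}$ (it corresponds to the point $P_{t_0}=\eta^0\cdot_L P_{t_0}\in R_{t_0}$), this shows $\mbox{coord}(R_{t_0})$ is an $(r,q-r)$ recovery set for $t_0$. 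As $t_0$ was arbitrary, $\mathcal{C}^{J}_\Delta$ has locality $(r,q-r)$.

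Next I would extract the bound on the $t$-localities. Recall that an $[q-1,r,q-r]$ MDS code, when punctured fewer than $q-r$ times, remains MDS of the same dimension $r$; equivalently, for any subset $S\subseteq\overline{R}$ with $\#S\ge r+t$ we have $d(\mathcal{C}^{J}_\Delta[S])\ge t+1$ whenever $t\le q-r-1$. Concretely, fix $t$ with $1\le t\le q-r-1$ and take any subset $\overline{R}_{t_0}\subseteq\overline{R}=\mbox{coord}(R_{t_0})$ containing $t_0$ with $\#\overline{R}_{t_0}=r+t$; then $\mathcal{C}^{J}_\Delta[\overline{R}_{t_0}]$ is obtained from the MDS code $\mathcal{C}^{J}_\Delta[\overline{R}]$ by puncturing $q-1-(r+t)=(q-r-1)-t\ge 0$ coordinates, which is strictly fewer than $q-r$ coordinates, so it is still MDS with parameters $[r+t,r,t+1]$ and in particular $d(\mathcal{C}^{J}_\Delta[\overline{R}_{t_0}])\ge t+1$. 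Hence $\overline{R}_{t_0}$ witnesses $r_{t}(\mathcal{C}^{J}_\Delta)\le \#\overline{R}_{t_0}-1=r+t-1$. Setting $t=q-r-\tau$ for $\tau=1,2,\dots,q-r-1$ turns this into $r_{q-r-\tau}\le q-1-\tau$, which is exactly the claimed inequality (after renaming $\tau$ back to $t$).

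I expect no serious obstacle here; the only point requiring a sentence of care is the standard fact that puncturing an MDS code at fewer than $d$ positions yields an MDS code of the same dimension (so that one should check $q-1-(r+t)<q-r$, i.e. $t>-1$, which holds trivially, and that $r+t\ge r$ so the dimension cannot drop — it cannot, since the punctured generator matrix of an MDS code retains full row rank as long as at most $d-1$ columns are removed). One could alternatively phrase the $t$-locality bound directly via the translation (observed in Section \ref{secuno}) that $\mathcal{C}$ having locality $(r,\delta)$ is equivalent to $r_{\delta-1}\le r+\delta-2$, applied to $\delta=q-r$, giving $r_{q-r-1}\le q-3$; but to reach the full range $r_{q-r-t}\le q-1-t$ for all $t$ one really wants the MDS-puncturing argument above, since a single recovery set $\mbox{coord}(R_{t_0})$ simultaneously certifies all of these via its shortenings. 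I would therefore present the proof as: invoke Proposition \ref{esmds}, read off (RD1), (RD2) for $\delta=q-r$, then observe that every size-$(r+t)$ subset of $\mbox{coord}(R_{t_0})$ through $t_0$ punctures the MDS code to another MDS code of distance $t+1$, yielding the stated $t$-locality bounds.
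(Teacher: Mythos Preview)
Your proposal is correct and follows essentially the same approach as the paper: invoke Proposition \ref{esmds} to get the $[q-1,r,q-r]$ MDS punctured code, read off (RD1)--(RD2) with $\delta=q-r$, and then use the standard fact that puncturing an MDS code fewer than $d$ times yields an MDS code of the same dimension to obtain the bounds $r_{q-r-t}\le q-1-t$. The only minor slip is the word ``shortenings'' in your last sentence---you mean puncturings, as you correctly wrote elsewhere.
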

\begin{proof}
The first statement follows from Proposition \ref{esmds}. The second one follows from the definition of $r_t$'s and the fact that
puncturing $t<d$ times an $[n,k,d]$ MDS code gives an $[n-t,k,d-t]$  MDS code.
\end{proof}

Let us note that formulas for the length and dimension of the codes $\mathcal{C}^{J}_\Delta$ are given given by Theorem \ref{ddimension}.  For the opposite, we do not have any explicit bound for its minimum distance (apart from the trivial bound $d(\mathcal{C}^{J}_\Delta)\ge d(E^{J}_\Delta)$). The same happens to most codes obtained as subfield-subcodes.
Therefore, we cannot explicitly calculate formulas for the Singleton defect.  In some of the examples we shall show in  Section \ref{sectres}, these distances are calculated by computer search.
Next we shall show that we can give such explicit formulas in the univariate case, $m=1$, when $\mbox{char}(\mathbb{F}_q)\neq 2$ and $Q=q^2$.

So let $m=1$. Given a closed set  $\Delta=\cup_{l=1}^{r} \mathfrak{I}_{\boldsymbol{a}_l} \subset \mathcal{H}_J$, we define its {\em dual} set as  $\Delta^\perp := \mathcal{H}_J \setminus \cup_{l=1}^r \mathfrak{I}_{n_J -\boldsymbol{a}_l}$.
The dual code $(\mathcal{C}^{J}_\Delta)^\perp$ is related to the dual set $\Delta^\perp $ as follows
\begin{equation}
\label{deltadual}
(\mathcal{C}_\Delta^J)^{\perp} = (E_\Delta^J \cap \mathbb{F}_q^{n_J})^{\perp} = E_{\Delta^{\perp}}^J \cap \mathbb{F}_q^{n_J},
\end{equation}
see  \cite[Proposition 2.4]{QINP2}.

\begin{teo}
\label{la211}
Assume $Q=q^2$ with $q$ odd. Let $m=1$, $J=L=\{1\}$ and $N_1=2(q-1)+1$. Take $r\le q-2$ consecutive integers $\boldsymbol{a}_1=0, \boldsymbol{a}_2=1,\dots, \boldsymbol{a}_r=r-1$,  and let $\Delta= \cup_{l=1}^{r} \mathfrak{I}_{\boldsymbol{a}_l}$.
Then the subfield-subcode $\mathcal{C}^{J}_\Delta$ has  length $n_J = 2(q-1)$, dimension $k= 2r - \lceil \frac{r}{2}\rceil$ and minimum distance $d \geq (q-1) - 2 \lfloor \frac{r-2}{2}\rfloor$. It is a sharp LRC code with  locality $(r,q-r)$ and  $(q-r-1)$-th Singleton defect
\[
D_{q-r -1} \le  \left\lceil \frac{r}{2}\right\rceil + 2 \left\lfloor \frac{r-2}{2}\right \rfloor +1 -r.
\]
\end{teo}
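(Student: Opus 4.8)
The plan is to verify each of the stated quantities in turn, reusing the structural results already established. First, the length $n_J = 2(q-1)$ is immediate from the formula $n_J = \prod_{j\in J}(N_j - 1)$ with $m=1$ and $N_1 = 2(q-1)+1$, so $N_1 - 1 = 2(q-1)$; this also forces the ambient $\mathbb{F}_Q = \mathbb{F}_{q^2}$ condition "$N_1 - 1 \mid Q-1$" to hold since $2(q-1) \mid q^2 - 1 = (q-1)(q+1)$, and "$q-1 \mid N_1 - 1$" holds so $L=\{1\}$ is legitimate. The fact that $\mathcal{C}^J_\Delta$ is an LRC code with locality $(r, q-r)$ follows directly from Theorem \ref{teo29}: here $\sigma_L(\boldsymbol{a}_l) = a_l = l-1$ for $l = 1,\dots,r$, so the set $\{\sigma_L(\boldsymbol{a}_1),\dots,\sigma_L(\boldsymbol{a}_r)\}$ consists of exactly the $r$ consecutive integers $0,1,\dots,r-1$, and $r \le q-2$; Theorem \ref{teo29} then gives both the $(r,q-r)$-locality and the chain $r_{q-r-t} \le q-1-t$.

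Next I would compute the dimension. The representatives of the minimal cyclotomic sets in $\mathcal{H}_J = \{0,1,\dots,N_1-2\} = \{0,1,\dots,2q-3\}$ are taken with respect to $q$-multiplication modulo $N_1 - 1 = 2(q-1)$. For $a = l-1$ with $0 \le l-1 \le r-1 \le q-3$, one checks $q\cdot a \equiv a + 2(q-1)\cdot(\text{something})$... more precisely $qa \bmod 2(q-1)$: since $q \equiv q \pmod{2(q-1)}$ and $q = (q-1)+1$, we get $qa = a(q-1) + a$, so $qa \equiv a(q-1) + a \pmod{2(q-1)}$; when $a$ is even this is $\equiv a \pmod{2(q-1)}$ giving $i_{\boldsymbol{a}} = 1$, and when $a$ is odd it gives a genuinely different element, so $i_{\boldsymbol{a}} = 2$. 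Hence among $a = 0,1,\dots,r-1$ there are $\lceil r/2 \rceil$ even values (contributing $1$ each to $\dim$) and $\lfloor r/2 \rfloor$ odd values (contributing $2$ each), for a total $\dim(\mathcal{C}^J_\Delta) = \lceil r/2 \rceil + 2\lfloor r/2 \rfloor = r + \lfloor r/2 \rfloor$. Since $r + \lfloor r/2 \rfloor = 2r - \lceil r/2 \rceil$, this matches the claim $k = 2r - \lceil r/2 \rceil$; Theorem \ref{ddimension} guarantees $\dim(\mathcal{C}^J_\Delta) = \#\Delta$ because $\Delta$ is closed, and $\#\Delta = \sum_{l=1}^r i_{\boldsymbol{a}_l}$.

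For the minimum distance bound $d \ge (q-1) - 2\lfloor (r-2)/2 \rfloor$, I would use the dual description \eqref{deltadual}: $(\mathcal{C}^J_\Delta)^\perp = E^J_{\Delta^\perp} \cap \mathbb{F}_q^{n_J}$ where $\Delta^\perp = \mathcal{H}_J \setminus \bigcup_l \mathfrak{I}_{n_J - \boldsymbol{a}_l}$, and then bound $d(\mathcal{C}^J_\Delta) \ge d(E^J_\Delta)$ — no wait, that trivial bound is weak. Instead I expect the intended route is to apply a BCH-type bound to $E^J_\Delta$ directly: $\Delta$ contains the consecutive exponents $\{0,1,\dots,r-1\}$ plus the "reflections" $q(l-1) \bmod 2(q-1)$ of the odd ones, and one identifies a long run of consecutive integers among the \emph{complement} of $\Delta$ (the defining set of the code viewed cyclically of length $2(q-1)$) to get a designed distance. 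The odd reflections land near $2(q-1)$, so the exponents \emph{not} in $\Delta$ form essentially the interval from about $r$ up to about $2(q-1) - (r-1)$ with the $\lfloor (r-2)/2 \rfloor$ reflected odd points removed; a consecutive run of length at least $(q-1) - 2\lfloor(r-2)/2\rfloor - 1$ survives in the dual's defining set, and the BCH bound yields the stated $d$. Finally, sharpness (reaching $r = d_1(\mathcal{C}^\perp) - 1$, i.e. Proposition \ref{d1dual} with equality, but really here the relevant statement is that the $(r,q-r)$ recovery sets meet (RD1)-(RD2) with equality, shown in Theorem \ref{teo29}), and the Singleton-defect bound $D_{q-r-1} \le \lceil r/2 \rceil + 2\lfloor (r-2)/2\rfloor + 1 - r$ follows by plugging $k = 2r - \lceil r/2\rceil$, the distance lower bound for $d$, $t = q-r-1$, and $r_t = q - 2$ (from $r_{q-r-t} \le q-1-t$ with $t=1$) into the definition \eqref{AA} of $D_t$ and simplifying; one must check the ceiling $\lceil k/(r_t - t + 1)\rceil = \lceil k/r\rceil$ evaluates to $2$, which holds since $r < k = r + \lfloor r/2\rfloor \le 2r$ for $r \ge 2$. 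The main obstacle I anticipate is the minimum-distance argument: correctly tracking the cyclotomic reflections modulo $2(q-1)$, identifying the longest guaranteed gap in the dual defining set, and confirming the BCH bound gives exactly $(q-1) - 2\lfloor(r-2)/2\rfloor$ rather than something off by a small additive constant; everything else is bookkeeping with the already-proven Theorems \ref{ddimension} and \ref{teo29}.
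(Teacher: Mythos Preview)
Your treatment of the length, dimension, $(r,q-r)$-locality (via Theorem~\ref{teo29}), and the final Singleton-defect arithmetic is essentially the paper's argument; the cyclotomic-set computation $\mathfrak{I}_a=\{a\}$ for $a$ even, $\mathfrak{I}_a=\{a,a+q-1\}$ for $a$ odd, matches exactly.

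There are two genuine gaps. First, the minimum-distance step: you start down the right road with the dual description~\eqref{deltadual}, then abandon it and sketch a direct BCH attack that you yourself flag as uncertain. The paper's route is cleaner and does not leave this loose end: it observes that $\Delta^\perp=\mathcal{H}_J\setminus\cup_l\mathfrak{I}_{n_J-\boldsymbol{a}_l}$ is itself a union of $(q-2)-2\lfloor(r-2)/2\rfloor$ minimal cyclotomic sets with \emph{consecutive} representatives, so Proposition~\ref{nuevapro} (the BCH bound already packaged for this setting) applies directly to $(E^J_{\Delta^\perp})^\perp$, and then the chain
\[
d(\mathcal{C}^J_\Delta)=d\bigl(((\mathcal{C}^J_\Delta)^\perp)^\perp\bigr)\ge d\bigl((E^J_{\Delta^\perp})^\perp\bigr)\ge (q-1)-2\lfloor(r-2)/2\rfloor
\]
finishes it. You never isolate the structural fact about $\Delta^\perp$, which is precisely what removes the ``off by a constant'' worry you mention.

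Second, sharpness: you conflate it with the $(r,q-r)$ recovery sets satisfying (RD1)--(RD2) with equality. That is not the definition. Sharpness means $r_1=d(\mathcal{C}^\perp)-1$ (equality in Proposition~\ref{d1dual}), and the paper obtains it by invoking Proposition~\ref{el22a} directly, since $\Delta$ is exactly of the form required there (the $r$ smallest representatives in $\mathcal{A}$). Theorem~\ref{teo29} gives locality but not sharpness.
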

\begin{proof}
The length of  $\mathcal{C}^{J}_\Delta$ is $n_J =N_1-1= 2(q-1)$.
The minimal cyclotomic sets in $\mathcal{H}_J$ are  $\mathfrak{I}_{\boldsymbol{a}}=\{\boldsymbol{a}\}$ when $\boldsymbol{a}$ is even and $\mathfrak{I}_{\boldsymbol{a}}=\{\boldsymbol{a}, q+\boldsymbol{a} -1\}$ when $\boldsymbol{a}$ is odd.
Thus $ \# \Delta = 2r - \left\lceil \frac{r}{2}\right\rceil$ and since $\Delta$ is a closed set, $\mathcal{C}^{J}_\Delta$ has dimension $k = \# \Delta = 2r - \left\lceil \frac{r}{2}\right\rceil$ according to Theorem \ref{ddimension}.
Furthermore, a simple computation shows that the dual set   $\Delta^\perp = \mathcal{H}_J \setminus \cup_{l=1}^r \mathfrak{I}_{n_J -\boldsymbol{a}_l}$ is the union of
$(q-2)- 2\lfloor (r-2)/2 \rfloor$ minimal cyclotomic sets with consecutive representatives. Then from equation (\ref{deltadual}) and Proposition \ref{el22a} we get a bound on the minimum distance of $\mathcal{C}^{J}_\Delta$ as follows
$$
d(\mathcal{C}^{J}_\Delta)= d\left(((\mathcal{C}^{J}_\Delta)^{\perp})^{\perp}\right) \ge d\left((E^{J}_{\Delta^{\perp}})^{\perp}\right)
\geq \left(q-1\right) - 2  \left\lfloor \frac{r-2}{2}\right\rfloor.
$$
From Proposition \ref{el22a} the code $\mathcal{C}^{J}_\Delta$ is sharp.
Finally, according to Theorem \ref{teo29},
$\mathcal{C}^{J}_\Delta$ is an LRC code with locality $(r,\delta)=(r, q-r)$. Based on this locality, the $(q-r-1)$-th Singleton defect of $\mathcal{C}^{J}_\Delta$ satisfies
\[
D_{q-r-1} \le 2(q-1) +1 - \left(
(q-1) - 2 \left\lfloor \frac{r-2}{2}\right\rfloor + 2r - \left\lceil \frac{r}{2}\right\rceil + \left\lceil \frac{2r - \left\lceil \frac{r}{2}\right\rceil}{r} -1 \right\rceil (q-r-1)
\right).
\]
Since
\[
\left\lceil \frac{2r - \left\lceil \frac{r}{2}\right\rceil}{r} -1 \right\rceil = 1,
\]
we get the bound on the defect $D_{q-r-1}$.
\end{proof}

\section{Examples}
\label{sectres}

In this section we give examples of parameters of LRC codes $\mathcal{C}^{J}_\Delta$ over $\mathbb{F}_q$ correcting several erasures, which are obtained as subfield-subcodes of $J$-affine variety codes $E^{J}_\Delta$. The theoretical support for these examples is in theorems  \ref{ddimension}, \ref{teo29}
and \ref{la211}. So the parameters we show arise from these theorems
when they provide these data, and are calculated by computer search otherwise (see below for details). In particular, all codes we present have locality  $(r,q-r)$, where   $r\le q-2$ depends on the cyclotomic sets in $\Delta$. In order to evaluate the quality of these codes for local recovery purposes, we use the Singleton bounds (\ref{SingdeltaEq}) and (\ref{Singdt}), and compute the corresponding Singleton defects as stated in Section \ref{secuno}. Let us recall that when
a code of parameters $[n,k,d]$ has locality $(r,\delta)$, then $r_1\le r$ and its defect $D_{\delta-1}$ satisfies
\begin{equation}
\label{estimate}
D_{\delta -1} \le n+1 - \left( d+k+\left( \left\lceil \frac{k}{r}\right\rceil-1\right) (\delta-1) \right).
\end{equation}
In most cases, the codes we present are optimal ($D_{\delta-1}=0$); otherwise they have small defect $D_{\delta-1}$.
Let us remark that, besides the Singleton-like bound, there exist other available bounds on the $(r,\delta)$'s
(eg. Corollary 3 of \cite{ABHM}). To gain clarity, these bounds are not included here.

This section is organized in two subsections. In the first one we show examples with $m=1$; the second one contains examples of bivariate codes, $m=2$,  improving some results obtained in the univariate case. We also include an example showing that this improvement does not always happen.

\subsection{Examples of LRC codes coming from the univariate case}

We take $m=1$, $J=L=\{ 1\}$,  and apply Theorems \ref{teo29} and \ref{la211}.
The following tables contain the relevant data of the obtained codes:   the cardinality $q$ of the ground field over which the code is defined; their parameters $[n,k,d]$; the dual distance $d^{\perp}$; the locality $(r, \delta)$ given by Theorems \ref{teo29} or \ref{la211};  and the estimate on the  $(\delta-1)$-th Singleton defect $D_{\delta-1}$ given by the bound of Equation (\ref{estimate}).
 All these data have been computed from the above theorems, except the minimum distance in Table 1,  which has been obtained by computer search using Magma \cite{Magma}.

Let us note that in all cases,  the localities $(r,\delta)$ shown satisfy that $\delta=q-r$  and $r$ is equal to $d^{\perp}-1$. So from Proposition \ref{d1dual}  it follows that all our codes are sharp and $r=r_1$, the classical locality.

\begin{exa}
{\rm
Let $q=8, Q=64$, $N_1=22$,
$\boldsymbol{a}_l=l-1$, $1\le l\le 6$,
and let $\Delta= \cup_{l=1}^{r} \mathfrak{I}_{\boldsymbol{a}_l}$ for $r=2,3,4,6$. We get codes with the parameters given in Table 1.
\begin{table}[htbp]
\begin{center}
\begin{tabular}{ccccc}
\hline
$q$ & $[n,k,d]$ & $d^{\perp}$  & $(r,\delta)$ &  $D_{\delta -1}$\\
\hline
8 & [21,  3  14] & 3& (2,6) & 0   \\
8 & [21,  5  12] & 4& (3,5) & 1   \\
8 & [21,  6  12] & 5& (4,4) & 1  \\
8 & [21, 10  8] & 7& (6,2) & 3   \\ \hline
\end{tabular}
\end{center}
\label{T1}
\caption{Univariate LRC codes over $\mathbb{F}_8$.}
\end{table}

Look, for example, at the $[21,6,12]$ code of the third row of Table 1. Its has locality $(4,4)$, so $r_3\le 6$ and thus $r_2\le 5, r_1\le 4$. Since $d^{\perp}=5$, from Proposition \ref{dtdual} we get equality in all cases. Then this code has defects $D_1=3, D_2=2$ and $D_3=1$. Note furthermore that the best known  $[21,6]$ code over $\mathbb{F}_8$ has minimum distance $d=13$,  \cite{Mint}.  So no currently  known $[21,6]$ code can be 1-optimal.
}\end{exa}

\begin{exa}{\rm
Let $q= 9$,  $Q=81$,  $N_1=2(q-1)+1=17$,
$\boldsymbol{a}_l=l-1$, $1\le l\le 7$,
and let $\Delta= \cup_{l=1}^{r} \mathfrak{I}_{\boldsymbol{a}_l}$ for $r=2,3,4,5,6,7$.
We get codes with the parameters given in Table \ref{T2}.
\begin{table}[htbp]
\begin{center}
\begin{tabular}{ccccc}
\hline
$q$ & $[n,k,d]$ & $d^{\perp}$ & $(r,\delta)$ &  $D_{\delta -1}$\\
\hline \hline
9 & [16,3,8] & 3& (2,7)   & 0   \\
9 & [16,4,8] & 4 & (3,6) & 0    \\
9 & [16,6,6] & 5 & (4,5) & 1    \\
9 & [16,7,6] & 6 & (5,4) & 1    \\
9 & [16,9,4] & 7 & (6,3) & 2    \\
9 & [16,10,4] & 8 & (7,2) & 2    \\ \hline
\end{tabular}
\end{center}
\caption{Univariate LRC codes over $\mathbb{F}_9$.}
\label{T2}
\end{table}
}\end{exa}

\begin{exa}{\rm
Let $q= 11$, $Q=121$,  $N_1=2(q-1)+1=21$,
$\boldsymbol{a}_l=l-1$, $1\le l\le 7$,
and let $\Delta= \cup_{l=1}^{r} \mathfrak{I}_{\boldsymbol{a}_l}$ for $r=2,3,\dots,7$.
We get codes with the parameters given in Table \ref{T5}.
\begin{table}[htbp]
\begin{center}
\begin{tabular}{ccccc}
\hline
$q$ & $[n,k,d]$ & $d^{\perp}$ &  $(r,\delta)$ &  $D_{\delta -1}$\\
\hline
11 & [20,  3,  10] & 3 & (2,9)  & 0   \\
11 & [20,  5,  10] & 4 & (3,8)	 & 0  \\
11 & [20,  6,  8 ]& 5 & (4,7) & 1    \\
11 & [20,  7,  8] & 6 & (5,6) & 1   \\
11 & [20,  9,  6] & 7 & (6,5) & 2   \\
11 & [20,  10, 6] & 8 & (7,4) & 2   \\ \hline
\end{tabular}
\end{center}
\caption{Univariate LRC codes over $\mathbb{F}_{11}$.}
\label{T5}
\end{table}
}\end{exa}

\begin{exa}{\rm
Let $q= 25$, $Q=625$,  $N_1=49$,
$\boldsymbol{a}_l=l-1$, $1\le l\le 7$,
and let $\Delta= \cup_{l=1}^{r} \mathfrak{I}_{\boldsymbol{a}_l}$ for $r=2,3,\dots,7$.
We get codes with the parameters given in Table \ref{T3}.
\begin{table}[htbp]
\begin{center}
\begin{tabular}{ccccc}
\hline
$q$ & $[n,k,d]$ & $d^{\perp}$  & $(r,\delta)$ &  $D_{\delta -1}$\\
\hline
25 & [48,3,24]& 3& (2,23)   & 0   \\
25 & [48,4,24] & 4& (3,22)	   & 0  \\
25 & [48,6,22] & 5 & (4,21) & 1    \\
25 & [48,7,22] & 6 & (5,20) & 1    \\
25 & [48,9,20] & 7 & (6,19) & 2    \\
25 & [48,10,20] & 8 & (7,18) & 2    \\
\hline
\end{tabular}
\end{center}
\caption{Univariate LRC codes over $\mathbb{F}_{25}$.}
\label{T3}
\end{table}
}\end{exa}

\begin{exa}{\rm
Let $q=27$,   $Q=729$,  $N_1=53$,
$\boldsymbol{a}_l=l-1$, $1\le l\le 7$,
and let $\Delta= \cup_{l=1}^{r} \mathfrak{I}_{\boldsymbol{a}_l}$ for $r=2,3,\dots,7$.
We get codes with the parameters given in Table \ref{T4}.
\begin{table}[htbp]
\begin{center}
\begin{tabular}{ccccc}
\hline
$q$ & $[n,k,d]$ & $d^{\perp}$  &  $(r,\delta)$ &  $D_{\delta -1}$\\
\hline
27 & [52,3,26]& 3& (2,25)  & 0 \\
27 & [52,4,26] & 4& (3,24)	& 0 \\
27 & [52,6,24] & 5 & (4,23) & 1 \\
27 & [52,7,24] & 6 & (5,22) & 1 \\
27 & [52,9,22] & 7 & (6,21) & 2 \\
27 & [52,10,22] & 8 & (7,20) & 2 \\
\hline
\end{tabular}
\end{center}
\caption{Univariate LRC codes over $\mathbb{F}_{27}$.}
\label{T4}
\end{table}
}\end{exa}

\subsection{Examples of LRC codes coming from the bivariate case}
\label{bivariate}

Let us consider now the bivariate case $m=2$, with $J=\{1,2\}$ and $L=\{1\}$. As above we show tables of parameters of codes $\mathcal{C}^{J}_\Delta$ over $\mathbb{F}_q$ for different values of $q$. The minimum distances in these tables  have been computed with Magma \cite{Magma}. The dual distances $d^{\perp}$ have been also computed with Magma;  they are included in the tables only when they provide relevant information about sharpness, what corresponds to the cases $q=8,11,16$. For $q=25,27$, the codes are far from being sharp, and the value of $d^{\perp}$ is omitted.
The examples we are going to present seem to suggest that bivariate codes give better results than the univariate ones. However, this is not always true, as shown in Example \ref{exno}.

\begin{exa}
{\rm
Let $q=8, Q=q^4=4096, N_1= 8$ and $N_2=6$. Table \ref{NT10} contains the parameters of codes $\mathcal{C}^{J}_\Delta$ obtained by using successively the following defining sets $\Delta$: $\mathfrak{I}_{(0,1)}$; $\mathfrak{I}_{(0,1)} \cup \mathfrak{I}_{(1,1)}$; and $\mathfrak{I}_{(0,1)} \cup \mathfrak{I}_{(1,1)} \cup \mathfrak{I}_{(2,1)}$.
\begin{table}[htbp]
\begin{center}
\begin{tabular}{ccccc}
\hline
$q$ & $[n,k,d]$ & $d^{\perp}$ &  $(r,\delta)$ &  $D_{\delta -1}$\\
\hline
8 & [35,4,14] & 2 & (1,7)  & 0   \\
8 & [35,8,12]  & 3 & (2,6)	 & 1  \\
8 & [35,12,10] & 4 & (3,5) & 2    \\
\hline
\end{tabular}
\end{center}
\caption{Bivariate LRC codes over $\mathbb{F}_{8}$.}
\label{NT10}
\end{table}
}\end{exa}

\begin{exa}
{\rm
Let $q=Q=N_1= 11$ and $N_2=3$. Table \ref{T9} contains the parameters of codes $\mathcal{C}^{J}_\Delta$ obtained by using the following defining sets $\Delta$: the first code (first row of Table \ref{T9}) comes from the set
$\mathfrak{I}_{(0,0)} \cup \mathfrak{I}_{(0,1)}  \cup\mathfrak{I}_{(1,0)} \cup\mathfrak{I}_{(2,0)}$;
the defining sets for the remaining codes are obtained by successively adding
the cyclotomic sets
$\mathfrak{I}_{(3,0)}$; $\mathfrak{I}_{(4,0)}$; $\mathfrak{I}_{(5,0)}$;  $\mathfrak{I}_{(1,1)}\cup \mathfrak{I}_{(6,0)}$; and $\mathfrak{I}_{(7,1)}$.
\begin{table}[htbp]
\begin{center}
\begin{tabular}{ccccc}
\hline
$q$ & $[n,k,d]$ & $d^{\perp}$ & $(r,\delta)$ &  $D_{\delta -1}$\\
\hline
11 & [20,4,10]  & 4 & (3,8)  & 0 \\
11 & [20,5,10]  & 4&  (4,7)  & 0 \\
11 & [20,6,10]  & 4 & (5,6) & 0   \\
11 & [20,7,10]  & 4 & (6,5)  & 0   \\
11 & [20,9,8]    & 6 & (7,4)	 & 1  \\
11 & [20,10,8]  & 8 & (8,3)  & 1   \\
\hline
\end{tabular}
\end{center}
\caption{Bivariate LRC codes over $\mathbb{F}_{11}$.}
\label{T9}
\end{table}
}\end{exa}

\begin{exa}
{\rm
Let $q=Q=N_1= 16$ and $N_2=4$. Table \ref{T8} contains the parameters of codes $\mathcal{C}^{J}_\Delta$ obtained by using the following defining sets $\Delta$: the first code comes from the set
$\mathfrak{I}_{(0,0)} \cup \mathfrak{I}_{(0,1)} \cup\mathfrak{I}_{(1,1)} \cup\mathfrak{I}_{(2,0)} \cup\mathfrak{I}_{(3,0)}$.
The defining sets for the remaining ones are obtained by successively adding the cyclotomic sets
$\mathfrak{I}_{(4,0)}$; $\mathfrak{I}_{(5,0)}$; and $\mathfrak{I}_{(4,1)}\cup \mathfrak{I}_{(6,1)}$.
\begin{table}[htbp]
\begin{center}
\begin{tabular}{ccccc}
\hline
$q$ & $[n,k,d]$ & $d^{\perp}$ & $(r,\delta)$ &  $D_{\delta -1}$\\
\hline
16 & [45,5,30]  &3 & (4,12)  & 0 \\
16 & [45,6,30] & 3& (5,11) & 0   \\
16 & [45,7,30] & 3& (6,10)  & 0   \\
16 & [45,9,28]  & 3& (7,9)	 & 1  \\
\hline
\end{tabular}
\end{center}
\caption{Bivariate LRC codes over $\mathbb{F}_{16}$.}
\label{T8}
\end{table}
}\end{exa}

\begin{exa}
{\rm
Let $q=Q=N_1= 25$ and $N_2=3$. Table \ref{T7} contains the parameters of codes $\mathcal{C}^{J}_\Delta$ obtained by using the following defining sets $\Delta$: the first code comes from the set
$\mathfrak{I}_{(0,0)} \cup \mathfrak{I}_{(0,1)} \cup\mathfrak{I}_{(1,0)} \cup\mathfrak{I}_{(1,1)} \cup\mathfrak{I}_{(2,0)}$.
The defining sets for the remaining ones are obtained by successively adding the following cyclotomic sets:
$\mathfrak{I}_{(3,0)}$; $\mathfrak{I}_{(4,0)}$;  $\mathfrak{I}_{(5,0)}$;  $\mathfrak{I}_{(6,0)}$;  $\mathfrak{I}_{(7,0)}$;  $\mathfrak{I}_{(8,0)}$; and  $\mathfrak{I}_{(9,0)}$.
\begin{table}[htbp]
\begin{center}
\begin{tabular}{cccc}
\hline
$q$ & $[n,k,d]$ & $(r,\delta)$ &  $D_{\delta -1}$\\
\hline
25 & [48,5,23]  & (3,22)  & 0 \\
25 & [48,6,23] & (4,21) & 0   \\
25 & [48,7,23] & (5,20)  & 0   \\
25 & [48,8,23]  & (6,19)	 & 0  \\
25 & [48,9,23]  & (7,18) & 0    \\
25 & [48,10,23]  & (8,17) & 0   \\
25 & [48,11,23] & (9,16) & 0   \\
25 & [48,12,23] & (10,15) & 0   \\
\hline
\end{tabular}
\end{center}
\caption{Bivariate LRC codes over $\mathbb{F}_{25}$.}
\label{T7}
\end{table}
}\end{exa}

\begin{exa}
{\rm
Let $q=Q=N_1= 27$ and $N_2=3$. Table \ref{T6} contains the parameters of codes $\mathcal{C}^{J}_\Delta$ obtained by using the following defining sets $\Delta$: the first code  comes from the set
$\mathfrak{I}_{(0,0)} \cup \mathfrak{I}_{(0,1)} \cup\mathfrak{I}_{(1,0)} \cup\mathfrak{I}_{(1,1)} \cup\mathfrak{I}_{(2,0)} \cup\mathfrak{I}_{(3,0)}$.
The defining sets for the remaining ones are obtained by successively adding the cyclotomic sets $\mathfrak{I}_{(4,0)}$; $\mathfrak{I}_{(5,0)}$;  $\mathfrak{I}_{(6,0)}$;  $\mathfrak{I}_{(7,0)}$;  $\mathfrak{I}_{(8,0)}$;  $\mathfrak{I}_{(9,0)}$; and $\mathfrak{I}_{(10,0)} \cup \mathfrak{I}_{(11,0)}$.
\begin{table}[htbp]
\begin{center}
\begin{tabular}{cccc}
\hline
$q$ & $[n,k,d]$ & $(r,\delta)$ &  $D_{\delta -1}$\\
\hline
27 & [52,6,25]  & (4,23)  & 0 \\
27 & [52,7,25] & (5,22)  & 0   \\
27 & [52,8,25]  & (6,21)	 & 0  \\
27 & [52,9,25]  & (7,20) & 0    \\
27 & [52,10,25]  & (8,19) & 0   \\
27 & [52,11,25] & (9,18) & 0   \\
27 & [52,12,25] & (10,17) & 0   \\
27 & [52,13,25] &  (11,16) & 0   \\
27 & [52,14,25] &  (12,15) & 0   \\
\hline
\end{tabular}
\end{center}
\caption{Bivariate LRC codes over $\mathbb{F}_{27}$.}
\label{T6}
\end{table}
}\end{exa}

\begin{exa} \label{exno}
{\rm To conclude this work we show the parameters of some univariate LRC codes  over $\mathbb{F}_{32}$ which seem not to be improved by the bivariate ones. Let $m=1$.  Take $q=32$, $Q=1024, N_1=94$,
$\boldsymbol{a}_l=l-1$ for $1\le l\le 4$,
and let $\Delta= \cup_{l=1}^{r} \mathfrak{I}_{\boldsymbol{a}_l}$ for $r=2,3,4$.
We get univariate codes with the parameters given in Table \ref{T10}.
\begin{table}[htbp]
\begin{center}
\begin{tabular}{ccccc}
\hline
$q$ & $[n,k,d]$ & $d^{\perp}$ &  $(r,\delta)$ &  $D_{\delta -1}$\\
\hline
32 & [93,3,62] & 3 & (2,30) & 0   \\
32 & [93,5,60] & 4 & (3,29) & 1   \\
32 & [93,6,60] & 5 & (4,28) & 1    \\
\hline
\end{tabular}
\end{center}
\caption{Univariate LRC codes over $\mathbb{F}_{32}$ that cannot be improved by the bivariate ones.}
\label{T10}
\end{table}
Bivariate codes  improving these should come from the choice
$N_1=32$,  $N_2=4$, and we are forced to use the same field extension.  An exhaustive computer search  shows that such bivariate codes do not improve the univariate ones.
}\end{exa}

{\bf Acknowledgement.}
We would like to thank the reviewers of this article for their comments and suggestions, which have contributed to significantly improve it. \newline
This work was written during a visit by the third author to the Department of Mathematics at  University Jaume I. This author wishes to express his gratitude to the staff of this university for their hospitality.

\end{document}